\begin{document}
\frontmatter          % for the preliminaries
\pagestyle{headings}  % switches on printing of running heads

\mainmatter   
%\usepackage{makeidx}  % allows for indexgeneration
%\usepackage[ruled,vlined]{algorithm2e}
%
     % for the preliminaries
            % start of the contributions
%
\title{Fixed-Orientation Equilateral Triangle Matching of Point Sets \thanks{Anil Maheshwari and Michiel Smid are supported by NSERC. Jasine Babu is supported by DFAIT.}}
%

%                                     also used for the TOC unless
%                                     \toctitle is used
%

\titlerunning{Fixed-Orientation Equilateral Triangle Matching of point sets}  % abbreviated title (for running head)
\author{Jasine Babu \inst{1} \and Ahmad Biniaz \inst{2} \and Anil Maheshwari \inst{2} \and Michiel Smid \inst{2}}
\authorrunning{Babu et al.} % abbreviated author list (for running head)
\institute{Department of Computer Science and Automation,\\Indian Institute of Science, Bangalore, India.\\
\and School of Computer Science, Carleton University, Ottawa, Canada
\email{jasine@csa.iisc.ernet.in, ahmad.biniaz@gmail.com, \{anil, michiel\}@scs.carleton.ca}}
\maketitle  

 % abbreviated author list (for running head)
%
%%%% list of authors for the TOC (use if author list has to be modified)
% \tocauthor{Ivar Ekeland, Roger Temam, Jeffrey Dean, David Grove,
% Craig Chambers, Kim B. Bruce, and Elisa Bertino}
%
%\institute{Department of Computer Science and Automation, Indian Institute of Science, Bangalore 560012, India.\\
% \email{abhijin@csa.iisc.ernet.in \and   jasine@csa.iisc.ernet.in  \and  sunil@csa.iisc.ernet.in}}
 
% \texttt{http://users/\homedir iekeland/web/welcome.html}
% \and
% Universit\'{e} de Paris-Sud,
% Laboratoire d'Analyse Num\'{e}rique, B\^{a}timent 425,\\
% F-91405 Orsay Cedex, France

%\maketitle              % typeset the title of the contribution

\begin{abstract}
Given a point set $P$ and a class $\mathcal{C}$ of geometric objects, $G_\mathcal{C}(P)$ is a geometric graph with vertex set $P$ such that any two vertices $p$ and $q$ are adjacent if and only if there is some $C \in \mathcal{C}$ containing both $p$ and $q$ but no other points from $P$. We study $G_{\bigtriangledown}(P)$ graphs where $\bigtriangledown$ is the class of downward equilateral triangles (ie. equilateral triangles with one of their sides parallel to the $x$-axis and the corner opposite to this side below that side). For point sets in general position, these graphs have been shown to be equivalent to half-$\Theta_6$ graphs and TD-Delaunay graphs. 

The main result in our paper is that for point sets $P$ in general position, $G_{\bigtriangledown}(P)$ always contains a matching of size at least $\left\lceil\frac{n-2}{3}\right\rceil$ and this bound cannot be improved above $\left\lceil\frac{n-1}{3}\right\rceil$.

We also give some structural properties of $G_{\davidsstar}(P)$ graphs, where $\davidsstar$ is the class which contains both upward and downward equilateral triangles. We show that for point sets in general position, the block cut point graph of $G_{\davidsstar}(P)$ is simply a path. Through the equivalence of $G_{\davidsstar}(P)$ graphs with $\Theta_6$ graphs, we also derive that any $\Theta_6$ graph can have at most $5n-11$ edges, for point sets in general position.
\keywords{Geometric graphs, Delaunay graphs, Matchings}
\end{abstract} 
 \section{Introduction}
In this work, we study the structural properties of some special geometric graphs defined on a set $P$  of $n$ points on the plane. An equilateral triangle with one side parallel to the $x$-axis and the corner opposite to this side below
(resp. above) that side as in $\bigtriangledown$ (resp. $\bigtriangleup$) will be called a down (resp. up)-triangle.
A point set $P$ is said to be in general position, if the line passing through any two points from $P$ does not make angles $0^\circ$, $60 ^\circ$ or $120^\circ$ with the horizontal \cite{Bonichon2010,Panahi}. In this paper, we consider only point sets that are in general position and our results assume this pre-condition.  

Given a point set $P$, $G_{\bigtriangledown}(P)$ (resp. $G_{\bigtriangleup}(P)$) is defined as the graph whose vertex set is $P$ and that has an edge between any two vertices $p$ and $q$ if and only if there is a down-(resp. up-)triangle containing both points $p$ and $q$ but no other points from $P$ (See Fig. \ref{graph}.).
We also define another graph $G_{\davidsstar}(P)$ as the graph whose vertex set is $P$ and that has an edge between any two vertices $p$ and $q$ if and only if there is a down-triangle or an up-triangle containing both points $p$ and $q$ but no other points from $P$. In Section \ref{prelims} we will see that, for any point set $P$ in general position, its $G_{\bigtriangledown}(P)$ graph is the same as the well known Triangle Distance Delaunay (TD-Delaunay) graph of $P$ and the half-$\Theta_6$ graph of $P$ on so-called negative cones. Moreover, $G_{\davidsstar}(P)$ is the same as the $\Theta_6$ graph of $P$ \cite{Bonichon2010,Chew1989}. 
\begin{figure}
\centering
  \includegraphics[scale=0.5]{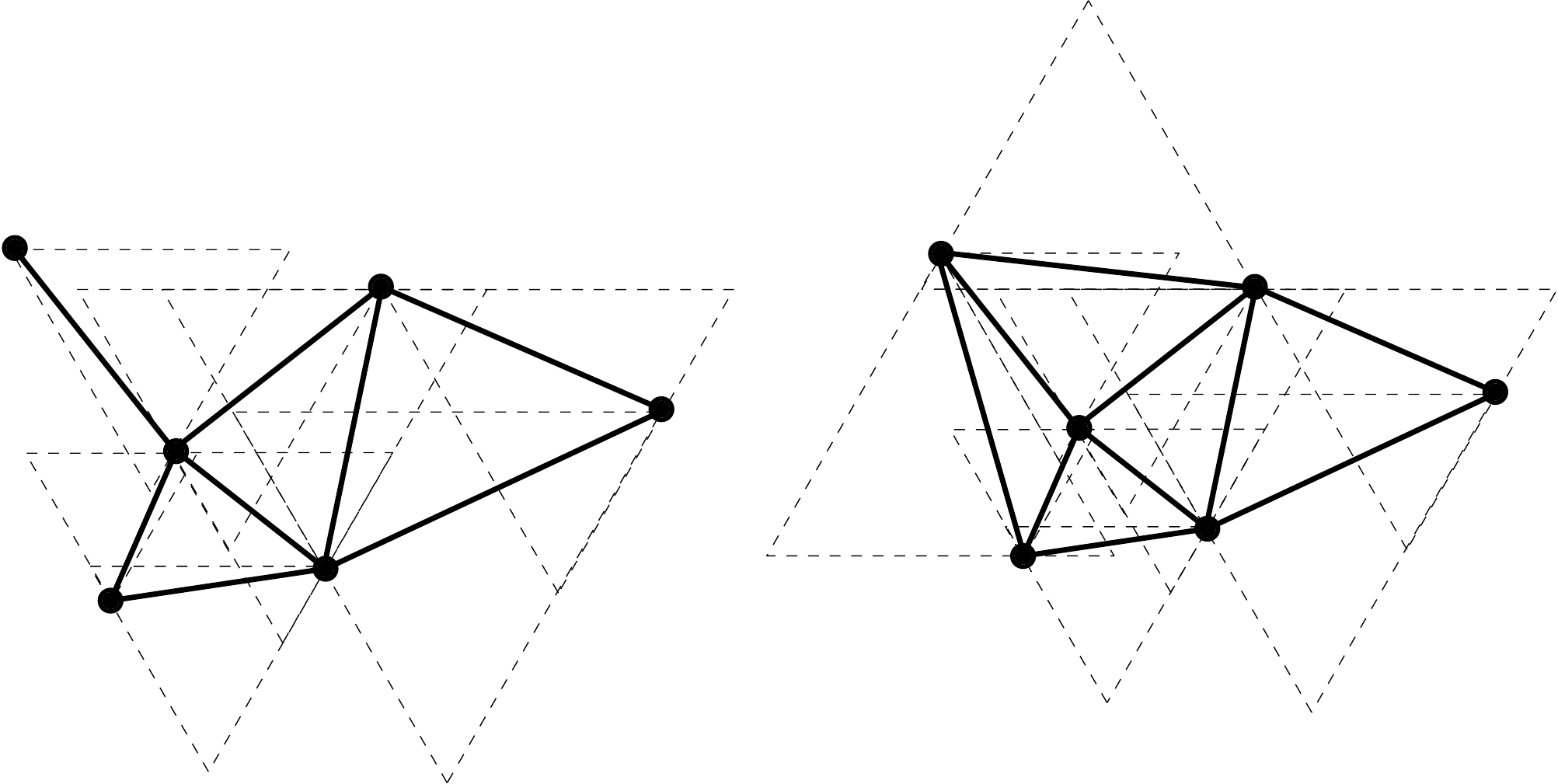}   % this is impossible.pdf
  \caption{A point set $P$ and its (a) $G_{\bigtriangledown}(P)$ and (b) $G_{\davidsstar}(P)$.}
\label{graph}
  \end{figure}

Given a point set $P$ and a class $\mathcal{C}$ of geometric objects, the maximum $\mathcal{C}$-matching problem is to compute a subclass $\mathcal{C}'$ of $\mathcal{C}$ of maximum cardinality such that no point from $P$ belongs to more than one element of $\mathcal{C}'$ and for each $C \in \mathcal{C}'$, there are exactly two points from $P$ which lie inside $C$. Dillencourt \cite{Dillencourt1990} proved that every point set admits a perfect circle-matching. \'{A}brego et al. \cite{Abrego2009} studied the isothetic square matching  problem. Bereg et al. concentrated on matching points using axis-aligned squares and rectangles \cite{Bereg2009}.

A matching in a graph $G$ is a subset $M$ of the edge set of $G$ such that no two edges in $M$ share a common end-point. A matching is called a maximum matching if its cardinality is the maximum among all possible matchings in $G$. If all vertices of $G$ appear as end-points of some edge in the matching, then it is called a perfect matching. It is not difficult to see that for a class $\mathcal{C}$ of geometric objects, computing the maximum $\mathcal{C}$-matching of a point set $P$ is equivalent to computing the maximum matching in the graph $G_\mathcal{C}(P)$.

The maximum $\bigtriangleup$-matching problem, which is the same as the maximum matching problem on $G_\bigtriangleup(P)$, was previously studied by Panahi et al. \cite{Panahi}. It was claimed that, for any point set $P$ of $n$ points in general position, any maximum matching of $G_\bigtriangleup(P)$ (and $G_{\bigtriangledown}(P)$) will match at least $\lfloor \frac{2n}{3} \rfloor$ vertices. But we found that their proof of Lemma 7, which is very crucial for their result, has gaps. By a completely different approach, we show that for any point set $P$ in general position, $G_\bigtriangledown(P)$ (and by symmetric arguments, $G_{\bigtriangleup}(P)$) will have a maximum matching of size at least $\lceil\frac{n-2}{3} \rceil$; i.e, at least $2(\lceil\frac{n-2}{3} \rceil)$ vertices are matched. We also give examples where our bound is tight, in all cases except when $|P|$ is one less than a multiple of three.

We also prove some structural and geometric properties of the graphs $G_{\bigtriangledown}(P)$ (and by symmetric arguments, $G_{\bigtriangleup}(P)$) and $G_{\davidsstar}(P)$. It will follow that for point sets in general position, $\Theta_6$ graphs can have at most $5n-11$ edges and their block cut point graph is a simple path. 
\section{Notations}
Our notations are similar to those used in \cite{Bonichon2010}, with some minor modifications adopted for convenience. 
A {\em cone} is the region in the plane between two rays that emanate from the same point, its apex. 
Consider the rays obtained by a counter-clockwise rotation of the positive $x$-axis by angles of $\frac{i\pi}{3}$ with $i=1, \dots, 6$ 
around a point $p$. (See Fig. \ref{Figcones}.) 
Each pair of successive rays, $\frac{(i-1)\pi}{3}$ and $\frac{i\pi}{3}$, defines a cone, denoted by $A_i(p)$, whose apex is $p$. 
For $i \in \{1, \ldots, 6\}$, when $i$ is odd, we denote $A_i(p)$ using $C_{\frac{i+1}{2}}(p)$ and the cone opposite to $C_i(p)$ using  $\overline{C}_i(p)$. We call $C_i(p)$ a positive cone around $p$ and $\overline{C_i}(p)$ a negative cone around $p$. For each cone $\overline{C_i}(p)$ (resp. $C_i(p)$), let $\ell_{\overline{C_i}(p)}$ (resp. $\ell_{{C_i}(p)}$) be its bisector. If $p' \in \overline{C_i}(p)$, then let $\overline{c_i}(p, p')$ denote the distance between $p$ and the orthogonal projection of $p'$ onto $\ell_{\overline{C_i}(p)}$. Similarly, if $p' \in C_i(p)$, then let ${c_i}(p, p')$ denote the distance between $p$ and the orthogonal projection of $p'$ onto $\ell_{C_i(p)}$. 
For $1 \le i \le 3$, let $V_i(p) = \{p' \in P \mid p' \in C_i(p), p' \ne p \}$ and $\overline{V_i}(p) = \{p' \in P \mid p' \in \overline{C_i}(p), p' \ne p \}$. For any two points $p$ and $q$, the smallest down-triangle containing $p$ and $q$ is denoted by $\bigtriangledown pq$ and the smallest up-triangle containing $p$ and $q$ is denoted by $\bigtriangleup pq$. If $G_1$ and $G_2$ are graphs on the same vertex set, $G_1 \cap G_2$ (resp. $G_1 \cup G_2$) denotes the graph on the same vertex set whose edge set is the intersection (resp. union) of the edge sets of $G_1$ and $G_2$.
\begin{figure}[h]
\centering
  \includegraphics[scale=0.6]{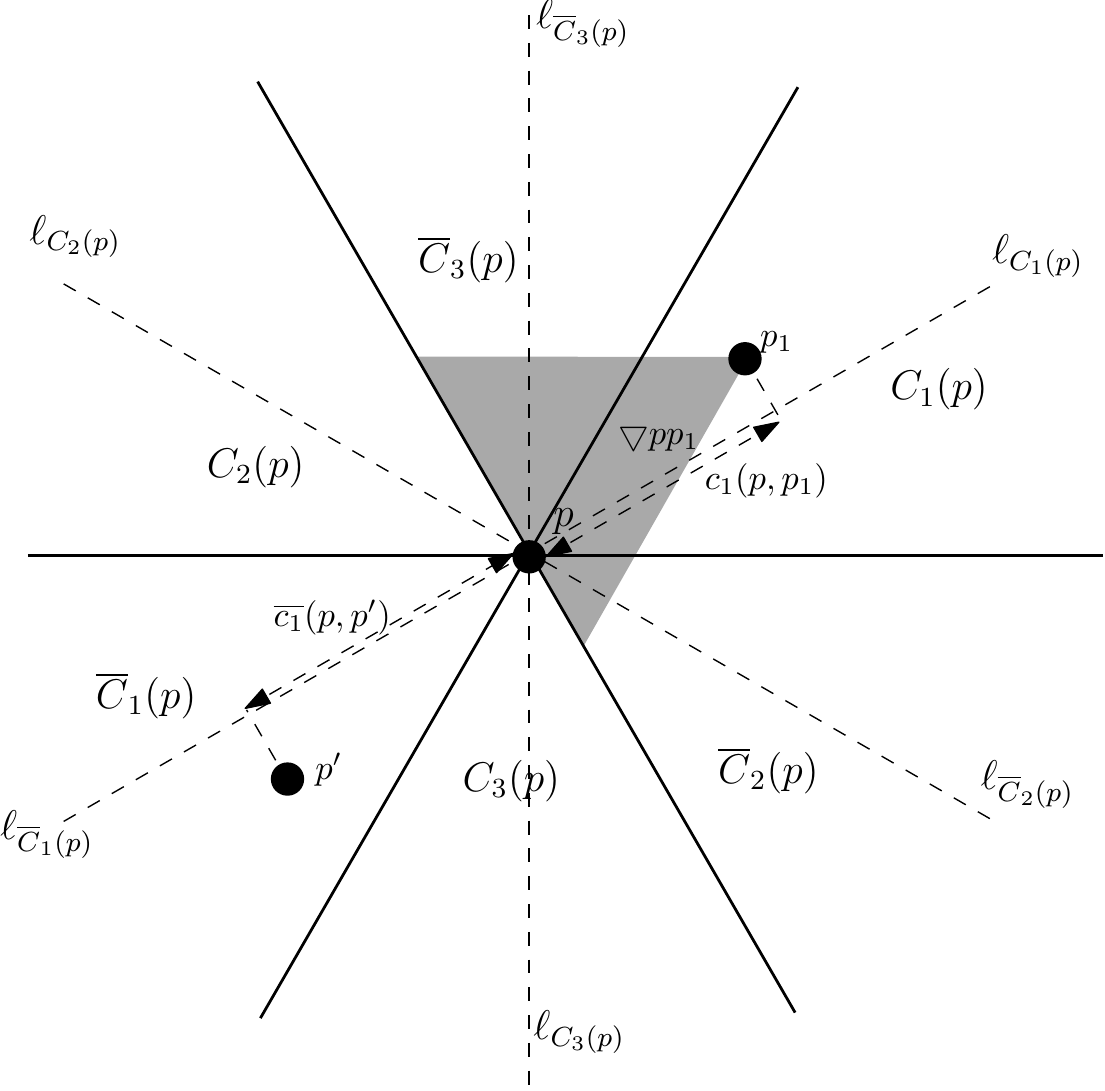}   % this is impossible.pdf
  \caption{Six angles around a point $p$.}
\label{Figcones}
  \end{figure}
\section{Preliminaries}\label{prelims}
In this section, we describe some basic properties of the geometric graphs described earlier and their equivalence with other geometric graphs which are well known in the literature. 

The class of down-triangles (and up-triangles) admits a shrinkability property \cite{Abrego2009}: each triangle object in this class that contains two points $p$ and $q$, can be shrunk such that $p$ and $q$ lie on its boundary. It is also clear that we can continue the shrinking process\textemdash from the edge that does not contain neither $p$ or $q$\textemdash until at least one of the points, $p$ or $q$, becomes a triangle vertex and the other point lies on the edge opposite to this vertex. After this, if we shrink the triangle further, it cannot contain $p$ and $q$ together. Therefore, for any pair of points $p$ and $q$, $\bigtriangledown pq$ ($\bigtriangleup pq$) has one of the points $p$ or $q$ at a vertex of $\bigtriangledown pq$ ($\bigtriangleup pq$) and the other point lies on the edge opposite to this vertex. In Fig. \ref{graph}, triangles are shown after shrinking. 

By the shrinkability property, for the $\bigtriangledown$-matching problem, it is enough to consider the smallest down-triangle for every pair of points $(p,q)$ from $P$. Thus, $G_{\bigtriangledown}(P)$ is equivalent to the graph whose vertex set is $P$ and that has an edge between any two vertices $p$ and $q$ if and only if $\bigtriangledown pq$ contains no other points from $P$. Notice that if $\bigtriangledown pq$ has $p$ as one of its vertices, then $q \in \overline{C_1}(p) \cup \overline{C_2}(p) \cup \overline{C_3}(p)$. The following two properties are simple, but useful.
\begin{property}\label{obs1}
 Let $p$ and $p'$ be two points in the plane. Let $i \in \{1, 2, 3\}$. The point $p$ is in the cone $C_i(p')$ if and only if the point $p'$ is in the cone $\overline{C}_i(p)$. Moreover, if $p$ is in the cone $C_i(p')$, then ${c_i}(p', p)=\overline{c_i}(p, p')$.
\end{property}
\begin{proof}
\begin{figure}
\centering
  \includegraphics[scale=0.6]{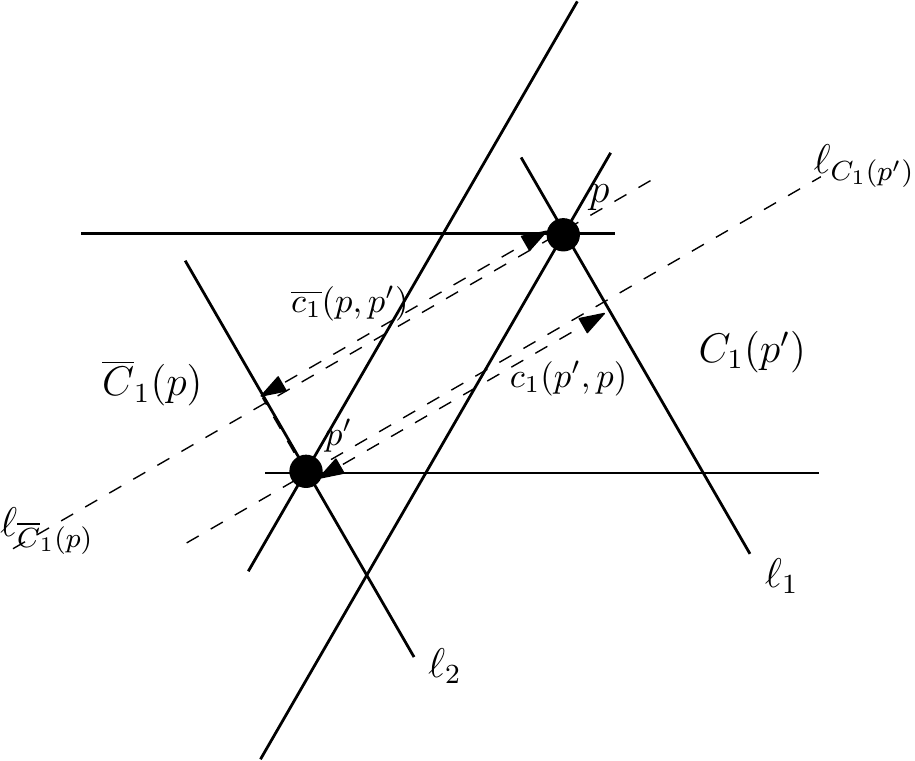}   % this is impossible.pdf
  \caption{Proof of Property \ref{obs1}.}
\label{Figcones2}
  \end{figure}
 The first part of the claim is obvious. Now, without loss of generality, assume that $i=1$ and $p \in C_1(p')$. (See Fig. \ref{Figcones2}.) Since $\ell_{\overline{C_1}(p)}$ is the bisector of $\overline{C_1}(p)$ and $\ell_{C_1(p')}$ is the bisector of $C_1(p')$, $\ell_{\overline{C_1}(p)}$ and $\ell_{C_1(p')}$ are parallel lines. Hence, $\overline{c_1}(p, p')$ is the perpendicular distance of $p'$ to the line $\ell_1$, which makes an angle $120^{\circ}$ with the horizontal and passes though $p$. Similarly, ${c_1}(p', p)$ is the perpendicular distance of $p$ to the line $\ell_2$, which makes an angle $120^{\circ}$ with the horizontal and passes though $p'$. Hence both $\overline{c_1}(p, p')$ and ${c_1}(p', p)$ are equal to the perpendicular distance between the lines 
$\ell_1$ and $\ell_2$.
\qed
\end{proof}
\begin{property}\label{obs2}
Let $P$ be a point set, $p \in P$ and $i \in \{1, 2, 3\}$. If $\overline{V}_i(p)$ is non-empty, then, 
in $G_\bigtriangledown(P)$, the vertex $p'$ corresponding to the point in $\overline{V}_i(p)$ 
with the minimum value of $\overline{c_i}(p, p')$ is the unique neighbour of vertex $p$ in $\overline{V}_i(p)$. 
\end{property}
\begin{proof}
Assume $\overline{V}_i(p) \ne \emptyset$. For any point $p'$ in $\overline{V}_i(p)$, it is easy to see that $\bigtriangledown pp'$ contains no points outside the cone $\overline{C_i}(p)$. Let $p'$ be the point with the minimum value of $\overline{c_i}(p, p')$. The minimality ensures that $\bigtriangledown pp'$ does not contain any other point other than $p$ and $p'$ from $P$. Therefore, $p$ and $p'$ are neighbours in $G_\bigtriangledown(P)$.

In order to prove uniqueness, consider any point $q$ in $P \cap \overline{V}_i(p)$ other than $p$ and $p'$. 
It can be seen that $\bigtriangledown pq$ contains the point $p'$ and therefore, $p$ and $q$ are not adjacent in $G_\bigtriangledown(P)$. 
Thus $p'$ is the only neighbour of $p$ in $\overline{V}_i(p)$.
\qed
\end{proof}
Consider a point set $P$ and let $p, q \in P$ be two distinct points. By Property \ref{obs1}, $\exists i \in \{1, 2, 3\}$ such that $p \in \overline{C_i}(q)$ or $q \in \overline{C_i}(p)$; by the general position assumption, both conditions cannot hold simultaneously. Since $\bigtriangledown pq$ has either $p$ or $q$ as a vertex, Property \ref{obs2} implies that we can construct $G_\bigtriangledown(P)$ as follows. For every point $p \in P$, and for each of the three cones, $\overline{C_i}$, for $i \in \{1, 2, 3\}$, add an edge from  $p$ to the point $p'$ in $\overline{V_i}(p)$ with the minimum value of $\overline{c_i}(p, p')$, if $\overline{V_i}(p)\ne \emptyset$. This definition of $G_\bigtriangledown(P)$ is the same as the definition of the half-$\Theta_6$-graph on negative cones ($\overline{C_i}$), given by Bonichon et al. \cite{Bonichon2010}. We can similarly define the graph $G_\bigtriangledown(P)$ using the cones ${C_i}$ instead of $\overline{C_i}$, for $i \in \{1, 2, 3\}$, and show that it is equivalent 
to the half-$\Theta_6$ graph on positive cones ($C_i$), given by Bonichon et al. \cite{Bonichon2010}. In Bonichon et al. \cite{Bonichon2010}, it was shown that for point sets in general position, the half-$\Theta_6$-graph, the {\em triangular distance-Delaunay graph} (TD-Del) \cite{Chew1989}, which are 2-spanners, and the {\em geodesic embedding} of $P$, are all equivalent. 

The $\Theta_k$-graphs discovered by Clarkson \cite{Clarkson1987} and Keil \cite{Keil1988} in the late 80's, are also used as spanners \cite{Narasimhan2007}. In these graphs, adjacency is defined as follows: the space around
each point $p$ is decomposed into $k \geqslant 2$ regular cones, each with apex $p$, and a point $q$ of a
given cone $C$ is linked to $p$ if, from $p$, the orthogonal projection of $q$ onto $C$'s bisector is the nearest point in $C$. In Bonichon et al. \cite{Bonichon2010}, it was shown that every $\Theta_6$-graph is the union of two half-$\Theta_6$-graphs, defined by $C_i$ and $\overline{C}_i$ cones. In our notation this is same as the graph $G_\bigtriangledown(P) \cup G_\bigtriangleup(P)$, which by definition, is equivalent to $G_{\davidsstar}(P)$. Thus, for a point set in general position, $\Theta_6(P) = G_{\davidsstar}(P)$. 

Now, we will prove some more properties of $G_\bigtriangledown(P)$ which will be used in the later sections of the paper.
\begin{property}\label{shortobs}
 Let $p \in P$ with $V_i(p) \ne \emptyset$, $\overline{V_j}(p) = \emptyset$, $\overline{V_k}(p) = \emptyset$ for distinct  
$i, j, k \in \{1, 2, 3 \}$. Then, in the graph $G_\bigtriangledown(P)$, $p$ has at least one neighbour in $V_i(p)$.
\end{property}
\begin{proof}
 Without loss of generality, assume that $i=1, j=2$ and $k=3$. It is easy to observe that, for any point $p_1 \in V_1(p)$, $\bigtriangledown pp_1 \subseteq C_1(p) \cup \overline{C_2}(p) \cup \overline{C_3}(p)$ (See Fig. \ref{Figcones}). Since $\overline{V_2}(p) = \emptyset$ and $\overline{V_3}(p) = \emptyset$, for any point $p_1 \in V_1(p)$, $P \cap \bigtriangledown pp_1 \subseteq V_1(p)\cup \{p\}$. 
To find a vertex in $V_1(p)$ which is a neighbour of $p$ in $G_\bigtriangledown(P)$, we just need to find a point $p_1 \in V_1(p)$ such that $\bigtriangledown pp_1$ contains no point from $V_1(p)$ other than $p_1$. 

We can choose any point $p_1 \in V_1(p)$ to start with. 
If $\bigtriangledown pp_1$ contains no point from $V_1(p)$ other than $p_1$, we are done.  
If not, replace $p_1$ with some other point inside $\bigtriangledown pp_1$ and repeat the process. Since triangle sizes are going down in each
step, eventually we will end up with a vertex in $V_1(p)$ such that $\bigtriangledown pp_1$ contains no point from $V_1(p)$ other than $p_1$. 
\qed
\end{proof}
\begin{property}\label{obs3}
 Let $p \in P$ with $V_i(p) \ne \emptyset$ and $\overline{V_i}(p) \ne \emptyset$ for some $i \in \{1, 2, 3 \}$. Then the
 vertex corresponding to $p$ has degree at least two in $G_\bigtriangledown(P)$.
\end{property}
\begin{proof}
Without loss of generality, assume that $V_1(p) \ne \emptyset$ and $\overline{V_1}(p) \ne \emptyset$. If $\overline{V_2}(p) \ne \emptyset$ or 
 $\overline{V_3}(p) \ne \emptyset$, then by Property \ref{obs2}, $p$ has a neighbour in $\overline{V_2}(p) \cup \overline{V_3}(p)$. 
On the other hand, if $\overline{V_2}(p) = \emptyset$ and $\overline{V_3}(p) = \emptyset$, then, by Property \ref{shortobs}, 
$p$ has at least one neighbour in $V_1(p)$. 

By Property \ref{obs2}, we know that $p$ has a unique neighbour in $\overline{V_1}(p)$ too. Thus, the degree of $p$ is at least two. 
\qed
\end{proof}
\begin{property}\label{obs4}
Let $p \in P$ be such that the vertex corresponding to $p$ is of degree one in $G_{\bigtriangledown}(p)$. 
Suppose $\exists i, j \in \{1, 2, 3\}, i \ne j$, such that $V_i(p) \ne \emptyset$ and $V_j(p) \ne \emptyset$. Let $k$ be the element in $\{1, 2, 3\} \setminus \{i, j\}$. Then, $\overline{V_k}(p) \ne \emptyset$.
\end{property}
\begin{proof}
Assume that $V_i(p) \ne \emptyset$ and $V_j(p) \ne \emptyset$, for distinct $i, j \in \{1, 2, 3\}$. 
If $\overline{V_i}(p) \ne \emptyset$ or $\overline{V_j}(p) \ne \emptyset$, then, by Property \ref{obs3}, the degree 
of $p$ is at least two in $G_\bigtriangledown(p)$, which is a contradiction. Therefore, $\overline{V_i}(p) \cup \overline{V_j}(p)=\emptyset$. 

If $\overline{V_k}(p) = \emptyset$, then by Property \ref{shortobs}, 
$p$ has at least one neighbour each in $V_i(p)$ and $V_j(p)$. If this is the case, the degree of $p$ is at least two, which is a contradiction. 
Therefore, $\overline{V_k}(p) \ne \emptyset$.
\qed
\end{proof}
\section{Some properties of $G_\bigtriangledown(P)$}
\subsection{Planarity}
Chew defined \cite{Chew1989} TD-Delaunay graph to be a planar graph and its equivalence with $G_\bigtriangledown(P)$ graph implies that $G_\bigtriangledown(P)$ is planar. This also follows from the general result that Delaunay graph of any convex distance function is a planar graph \cite{Bose}. For the sake of completeness, we include a direct proof here.   
\begin{lemma}\label{lmplanar}
 For a point set $P$, its $G_\bigtriangledown(P)$ is a plane graph, where its edges are straight line segments between the corresponding end-points.
\end{lemma}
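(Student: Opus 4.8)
The plan is to prove the stronger statement that, drawn with straight-line edges, no two edges of $G_\bigtriangledown(P)$ cross. The ingredients I would set up first are already available from the Preliminaries: by the shrinkability discussion, every edge $pq$ corresponds to the smallest down-triangle $\bigtriangledown pq$, which contains no point of $P$ besides $p$ and $q$ and has both $p$ and $q$ on its boundary, one of them being a vertex and the other lying on the opposite side; and since $\bigtriangledown pq$ is convex, the whole segment $pq$ lies inside it. It is convenient to also record the three linear functionals $h_1(z)=y$, $h_2(z)=-\tfrac{\sqrt3}{2}x-\tfrac12 y$ and $h_3(z)=\tfrac{\sqrt3}{2}x-\tfrac12 y$ given by the outward normals of the three sides of a down-triangle. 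They satisfy $h_1+h_2+h_3=0$, a down-triangle is exactly a set $\{z : h_i(z)\le t_i,\ i=1,2,3\}$, and the smallest one containing $p$ and $q$ is $\bigtriangledown pq=\{z : h_i(z)\le\max(h_i(p),h_i(q)),\ i=1,2,3\}$. Thus a point $w$ lies in $\bigtriangledown pq$ if and only if $h_i(w)\le\max(h_i(p),h_i(q))$ for every $i$.

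Next I would assume, for contradiction, that two edges $pq$ and $rs$ cross at an interior point $x$. The four endpoints are then distinct, and since $\bigtriangledown pq$ and $\bigtriangledown rs$ are empty we have $r,s\notin\bigtriangledown pq$ and $p,q\notin\bigtriangledown rs$; the general position assumption lets me also exclude stray points of $P$ on the triangle boundaries, so these exclusions are strict. The heart of the proof is the geometric claim that such a crossing is impossible, namely: whenever segment $pq$ crosses segment $rs$, at least one of the four endpoints lies in the down-triangle determined by the other edge. In the functional language this reads as one of the four containments ``$h_i(p)\le\max(h_i(r),h_i(s))$ for all $i$'' (and its analogues for $q$, $r$, $s$) necessarily holding. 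Proving the claim exhibits a point of $P$ inside an empty down-triangle, which is the contradiction that finishes the lemma.

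The main obstacle is exactly this core claim; the rest is bookkeeping. I would prove it by a case analysis that is cut down using the symmetries of the configuration: a reflection in a vertical line and the rotations of the plane by $120^\circ$ and $240^\circ$ all map the class of down-triangles to itself (they merely permute the three edge directions $0^\circ,60^\circ,120^\circ$), so after applying a suitable symmetry I may assume $p$ is the apex of $\bigtriangledown pq$ and $q$ lies on its horizontal top side. With this normalization one coordinate is automatic: $h_1(p)=y_p$ is the minimum of $h_1$ over $\bigtriangledown pq$, the crossing point $x$ lies strictly above $p$ on segment $pq$, and $y_x\le\max(y_r,y_s)$, so $h_1(p)<\max(h_1(r),h_1(s))$. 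Segment $pq$ splits $\bigtriangledown pq$ into a left and a right part, and because $r,s\notin\bigtriangledown pq$ while $rs$ passes through $x$, the segment $rs$ must enter and leave $\bigtriangledown pq$ through boundary points in these two different parts; tracking which of the left, right and top sides are crossed, and comparing the support values $h_2,h_3$ of $r,s$ against those of $p,q$, I would verify in each case that the remaining two inequalities defining $\bigtriangledown rs$ hold for one of $p,q$ (or, symmetrically, that one of $r,s$ falls inside $\bigtriangledown pq$). The delicate points are organizing these subcases so that the symmetry reduction genuinely covers all crossing types, and invoking general position carefully so that every containment obtained is strict and hence a true violation of emptiness.
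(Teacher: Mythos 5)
Your overall architecture is correct, but it takes a genuinely different route from the paper. The paper also draws each edge $pq$ as a straight segment inside $\bigtriangledown pq$, but then argues via the pseudo-disk behaviour of homothets: if the interiors of $\bigtriangledown pq$ and $\bigtriangledown p'q'$ overlap, emptiness rules out one triangle containing the other, so (the sides being parallel) a corner of one infiltrates the other, the two boundaries cross in exactly two points $a,b$, and the line through $a,b$ separates segment $pq$ from segment $p'q'$. You instead reduce planarity to the claim that two crossing segments force at least one of the four endpoints into the down-triangle of the other pair, contradicting emptiness. Both routes are valid; the paper's argument is shorter and works verbatim for homothets of any convex body (i.e., for Delaunay graphs of arbitrary convex distance functions), while yours is more algebraic and, once the claim is in hand, needs no picture at all. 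Two small remarks: your appeal to general position to make containments strict is unnecessary, since the paper's emptiness condition is $\bigtriangledown pq \cap (P\setminus\{p,q\})=\emptyset$ for the closed triangle, so a non-strict containment already contradicts it; and if two edges share an endpoint and still cross at an interior point, they overlap along a line, in which case one endpoint lies on the other segment and hence inside the other triangle, so this degenerate case also falls to the same claim.

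The one real weakness is that the claim, which you rightly call the heart of the proof, is left to a symmetry-plus-case analysis that you only sketch; as written the argument is incomplete. However, your own support-function setup closes it without any cases. Let $pq$ and $rs$ cross at $x$, and put $t_i=\max(h_i(p),h_i(q))$ and $u_i=\max(h_i(r),h_i(s))$, so that $\bigtriangledown pq=\{z: h_i(z)\le t_i \text{ for all } i\}$ and $\bigtriangledown rs=\{z: h_i(z)\le u_i \text{ for all } i\}$. Since each $h_i$ is linear and $x$ lies on both segments, $h_i(x)\le t_i$ and $h_i(x)\le u_i$ for all $i$. Suppose for contradiction that $r,s\notin\bigtriangledown pq$ and $p,q\notin\bigtriangledown rs$. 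Let $A=\{i: h_i(r)>t_i\}$ and $B=\{i: h_i(s)>t_i\}$; both are nonempty, and $A\cap B=\emptyset$, since $i\in A\cap B$ would force $h_i(x)\ge\min\bigl(h_i(r),h_i(s)\bigr)>t_i$, contradicting $x\in\bigtriangledown pq$. For every $i\in A\cup B$ we get $u_i>t_i$, and $|A\cup B|\ge 2$. Symmetrically, the indices witnessing $p,q\notin\bigtriangledown rs$ form a set of size at least $2$ on which $t_i>u_i$. These two index sets must be disjoint, yet $\{1,2,3\}$ cannot contain two disjoint sets of size two; contradiction. Note that this counting ($2+2>3$) is exactly where the triangle, with its three constraints, is essential: your route does not generalize to homothets of quadrilaterals, whereas the paper's does generalize to arbitrary convex homothets.
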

\begin{proof}
Whenever there is an edge between $p$ and $q$ in $G_\bigtriangledown(P)$, we draw it as a straight line segment from $p$ to $q$. Notice that this segment always lies within $\bigtriangledown pq$. We will show that this gives a planar embedding of $G_\bigtriangledown(P)$. 
\begin{figure}[h]
\centering
 \includegraphics[scale=0.6]{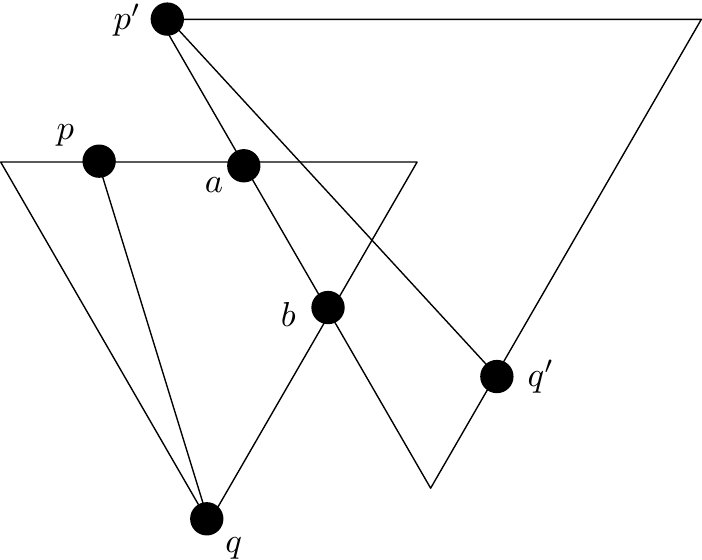}   % this is impossible.pdf
  \caption{Intersection of $\bigtriangledown pq$ and $\bigtriangledown p'q'$ does not lead to crossing of edges $pq$ and $p'q'$.}
\label{planarity}
 \end{figure}
Consider two edges $pq$ and $p'q'$ of $G_\bigtriangledown(P)$. If the interiors of $\bigtriangledown pq$ and $\bigtriangledown p'q'$ have no point in common, the line segments $pq$ and $p'q'$ can not cross each other. Suppose the interiors of $\bigtriangledown pq$ and $\bigtriangledown p'q'$ share some common area. The case that $\bigtriangledown pq \subseteq \bigtriangledown p'q'$ (or vice versa) is not possible, because in this case $\bigtriangledown p'q'$ contains $p$ and $q$ (or $\bigtriangledown pq$ contains $p'$ and $q'$), which contradicts its emptiness. Since $\bigtriangledown pq$ and $\bigtriangledown p'q'$ have parallel sides, this implies that one corner of $\bigtriangledown pq$ infiltrates into $\bigtriangledown p'q'$ or vice versa (see Fig. \ref{planarity}). Thus their boundaries cross at two distinct points, $a$ and $b$. Since $P \cap \bigtriangledown p'q' \cap \bigtriangledown p'q' =\emptyset$, the points $p$ and $q$ must be on that portion of the boundary of $\bigtriangledown pq$ that does not 
lie inside $\bigtriangledown p'q'$. So the 
line through $ab$ separates $pq$ from $p'q'$. 
\qed
\end{proof}
Throughout this paper, we use $G_\bigtriangledown(P)$ to represent both the abstract graph and its planar embedding described in Lemma \ref{lmplanar}. The meaning will be clear from the context.
 \subsection{Connectivity}
In this section, we prove that for a point set $P$, its $G_\bigtriangledown(P)$ is connected. As stated in the following lemma, between every pair of vertices, there exist a path with a special structure.  
\begin{lemma}\label{pathintriangle}
 Let $P$ be a point set with $p, q \in P$. Then, in $G_\bigtriangledown(P)$, there 
is a path between $p$ and $q$ which lies fully in $\bigtriangledown pq$ and hence $G_\bigtriangledown(P)$ is connected. 
\end{lemma}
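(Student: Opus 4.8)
The plan is to prove the statement by induction on the number $k$ of points of $P$ that lie in the closed triangle $\bigtriangledown pq$ other than $p$ and $q$. In the base case $k=0$, the triangle $\bigtriangledown pq$ is empty of other points, so $pq$ is an edge of $G_\bigtriangledown(P)$; the straight segment $pq$ lies in the convex region $\bigtriangledown pq$ and is the required path. For the inductive step, note that the claim is symmetric in $p$ and $q$, so I may assume without loss of generality that $p$ is the vertex of $\bigtriangledown pq$ and $q$ lies on the opposite edge; as observed in the preliminaries this means $q\in\overline{C_i}(p)$ for some $i\in\{1,2,3\}$, hence $q\in\overline{V_i}(p)\ne\emptyset$. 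Let $p'$ be the point of $\overline{V_i}(p)$ minimizing $\overline{c_i}(p,p')$. By Property~\ref{obs2}, $p'$ is a neighbour of $p$ in $G_\bigtriangledown(P)$, and since $\overline{c_i}(p,p')\le\overline{c_i}(p,q)$ with $p'\in\overline{C_i}(p)$, the point $p'$ lies in $\bigtriangledown pq$ (which is exactly the part of the cone $\overline{C_i}(p)$ out to the level of $q$). If $p'=q$ we are done via the edge $pq$, so assume $p'\ne q$.

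The heart of the argument is to compare $\bigtriangledown p'q$ with $\bigtriangledown pq$, and for this I would first record a routine fact: every down-triangle is the intersection of three half-planes $\langle u_j,x\rangle\le c_j$ whose outward normals $u_1,u_2,u_3$ point in three fixed directions $120^\circ$ apart (equivalently, the down-triangles are precisely the translates and positive scalings of one fixed triangle). Consequently, the smallest down-triangle containing a set $S$ is $\{x:\langle u_j,x\rangle\le m_j(S),\ j=1,2,3\}$ with $m_j(S)=\max_{s\in S}\langle u_j,s\rangle$, and this operation is monotone: $S\subseteq S'$ implies containment of the corresponding smallest triangles. Since $p',q\in\bigtriangledown pq$, monotonicity applied to $\{p',q\}\subseteq\bigtriangledown pq$ yields $\bigtriangledown p'q\subseteq\bigtriangledown pq$.

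It then remains to show $p\notin\bigtriangledown p'q$, so that the inductive measure strictly drops. Here the general position assumption is exactly what is needed. The vertex $p$ of $\bigtriangledown pq$ lies on the two edges whose normals are, say, $u_2$ and $u_3$, so $\langle u_2,p\rangle=m_2(\{p,q\})$ and $\langle u_3,p\rangle=m_3(\{p,q\})$. If $q$ (resp.\ $p'$) lay on either of these two edges, the segment from $p$ to that point would run along an edge direction of the triangle, making an angle of $0^\circ$, $60^\circ$ or $120^\circ$ with the horizontal—forbidden in general position. Hence $\langle u_2,\cdot\rangle$ and $\langle u_3,\cdot\rangle$ are strictly smaller than $m_2,m_3$ at both $p'$ and $q$, so $m_2(\{p',q\})<\langle u_2,p\rangle$ and the constraint defining $\bigtriangledown p'q$ is violated at $p$; thus $p\notin\bigtriangledown p'q$. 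Therefore $P\cap\bigtriangledown p'q$ is a proper subset of $P\cap\bigtriangledown pq$ missing $p$, and the number of enclosed points strictly decreases. The induction hypothesis gives a path from $p'$ to $q$ inside $\bigtriangledown p'q\subseteq\bigtriangledown pq$; prepending the edge $pp'$ (a segment contained in the convex set $\bigtriangledown pq$) produces the desired path from $p$ to $q$ inside $\bigtriangledown pq$. Connectivity of $G_\bigtriangledown(P)$ is then immediate. I expect the main obstacle to be the containment-plus-strict-decrease step $\bigtriangledown p'q\subseteq\bigtriangledown pq$ with $p\notin\bigtriangledown p'q$; everything hinges on the homothet description of down-triangles and on using general position to rule out points sitting on the edges incident to $p$.
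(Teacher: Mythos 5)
Your proof is correct, but it takes a genuinely different route from the paper's. The paper proves the lemma by induction on the \emph{area} of $\bigtriangledown pq$: if the interior of $\bigtriangledown pq$ contains some point $x\in P$, then both $\bigtriangledown px$ and $\bigtriangledown xq$ are strictly smaller triangles contained in $\bigtriangledown pq$, so the induction hypothesis is applied \emph{twice} (to the pairs $(p,x)$ and $(x,q)$) and the two paths are concatenated; this divide-and-conquer argument needs nothing beyond monotonicity of smallest enclosing down-triangles, and in particular uses neither Property~\ref{obs2} nor the general position assumption. You instead induct on the \emph{number of points} of $P$ inside $\bigtriangledown pq$ and walk greedily: you take the specific neighbour $p'$ of $p$ in the cone $\overline{C_i}(p)$ containing $q$ (via Property~\ref{obs2}), prove $\bigtriangledown p'q\subseteq\bigtriangledown pq$ by the same monotonicity, prove $p\notin\bigtriangledown p'q$ using general position to get the strict decrease, recurse once on $(p',q)$, and prepend the edge $pp'$. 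Both hinge on the homothet/half-plane description of down-triangles, but the structures differ: the paper's proof is shorter and assumption-free at this point, while yours is constructive in a stronger sense---it is essentially the standard local-routing argument for half-$\Theta_6$/TD-Delaunay graphs, producing the path edge by edge, always stepping to the unique neighbour in the cone of the destination. Your integer-valued induction measure is also formally cleaner than induction on a real-valued area, which is legitimate in the paper only because finitely many triangle areas occur; on the other hand, your strict-decrease step ($p\notin\bigtriangledown p'q$) genuinely requires general position, a dependence the paper's argument avoids.
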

\begin{proof}
 We will prove this using induction on the area of $\bigtriangledown pq$. For any pair of distinct points $p, q \in P$, if the interior of $\bigtriangledown pq$ does not contain any point from $P$, by definition, there is an edge from $p$ to $q$ in $G_\bigtriangledown(P)$. By induction, assume that for  pairs of points $x, y \in P$ such that the area of $\bigtriangledown xy$ is less than the area of $\bigtriangledown pq$, in the graph in $G_\bigtriangledown(P)$, there is a path  which lies fully in $\bigtriangledown xy$ between $x$ and $y$.

If the interior of $\bigtriangledown pq$ does not contain any point from $P$, there is an edge from $p$ to $q$ in $G_\bigtriangledown(P)$. Otherwise, there is a point $x \in P$ which is in the interior of $\bigtriangledown pq$. This implies $\bigtriangledown px \subset \bigtriangledown pq$ and $\bigtriangledown xq \subset \bigtriangledown pq$. Since the area of $\bigtriangledown px$ and the area of $\bigtriangledown xq$ are both less than the area of $\bigtriangledown pq$, by the induction hypothesis, there is a path that lies in $\bigtriangledown px$ between $p$ and $x$ and there is a path that lies in $\bigtriangledown xq$ between $x$ and $q$. By concatenating these two paths, we get a path which lies in $\bigtriangledown pq$ between $p$ and $q$.  
\qed
\end{proof}
\subsection{Number of degree-one vertices}
In this section, we prove for a point set $P$, its $G_\bigtriangledown(P)$ has at most three vertices 
of degree one. This fact is important for our proof of the lower bound of the cardinality of a maximum matching in $G_\bigtriangledown(P)$.
\begin{lemma} \label{degreeone}
For a point set $P$, its $G_\bigtriangledown(P)$ has at most three vertices 
of degree one.
\end{lemma}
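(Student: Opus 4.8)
The plan is to show that any degree-one vertex in $G_\bigtriangledown(P)$ must have a very special cone configuration, and then argue geometrically that at most three vertices can simultaneously satisfy this configuration. Let $p$ be a vertex of degree one. By Property~\ref{obs2}, each nonempty negative cone $\overline{V_i}(p)$ contributes exactly one neighbour, so $p$ can have at most one nonempty negative cone. I would split into cases according to how many of the positive cones $V_i(p)$ are nonempty. If two or more positive cones are nonempty, Property~\ref{obs4} forces the third cone's negative counterpart $\overline{V_k}(p)$ to be nonempty; combined with the degree-one assumption and Property~\ref{shortobs}, this pins down exactly which single cone around $p$ is ``empty on the negative side.'' The upshot I expect is that a degree-one vertex is characterized by having all its points concentrated so that it is \emph{extremal} in one of the three lattice-like directions determined by the cone boundaries.

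Next I would make this extremality precise. The three families of cone bisectors correspond to three directions at $120^\circ$ to one another (the normals to the sides of a down-triangle). My claim is that a degree-one vertex must be a point that is extreme with respect to one of these three directions --- informally, it sits on the boundary of the point set as ``seen'' along a $\overline{C_i}$ direction, meaning its empty negative cone $\overline{V_i}(p)=\emptyset$ reflects that no point of $P$ lies in that cone. The key observation is that for each $i\in\{1,2,3\}$ there can be essentially only one such extremal vertex whose unique neighbour lies in the corresponding positive cone $V_i$, because two distinct degree-one vertices associated with the \emph{same} cone direction would force one to lie in the empty cone of the other, a contradiction. Summing over the three cone directions then yields the bound of three.

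**The main obstacle will be** carefully ruling out the degenerate sub-cases: a degree-one vertex could in principle have just one nonempty positive cone and no nonempty negative cone, or one nonempty positive and one nonempty negative cone, and I must verify that each such vertex still gets uniquely charged to exactly one of the three directions without any direction being charged twice. Handling the possibility that $p$ has a neighbour inside a positive cone (rather than a negative cone) is the delicate point, since Property~\ref{shortobs} guarantees \emph{at least} one such neighbour but the degree-one hypothesis demands \emph{exactly} one, so I would need to show that the unique-neighbour point and the global extremality together force the charging to be injective. I anticipate using a clean charging argument: assign each degree-one vertex to the index $i$ of its (necessarily unique) empty-or-special cone, and prove that the map from degree-one vertices to $\{1,2,3\}$ is injective by deriving a containment contradiction (one vertex lying inside a supposedly empty cone of another) whenever two vertices share an index.

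Once the case analysis is organized, the concluding count is immediate, so I would invest the bulk of the writeup in Properties~\ref{obs2}--\ref{obs4} bookkeeping and in the geometric lemma that two degree-one vertices cannot share a cone direction.
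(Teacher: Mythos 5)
Your structural analysis of a degree-one vertex is essentially right: by Property~\ref{obs2} at most one negative cone of such a vertex $p$ can be nonempty, and combining Properties~\ref{obs3}, \ref{obs4} and \ref{shortobs}, every degree-one vertex falls into one of two types: \emph{type (I)}, where all of $P\setminus\{p\}$ lies in a single positive cone $C_i(p)$ and all three negative cones are empty, or \emph{type (II)}, where exactly one negative cone $\overline{C_i}(p)$ is nonempty, $V_i(p)=\emptyset$, and any remaining points lie in $C_j(p)\cup C_k(p)$. The genuine gap is in your counting step: the map from degree-one vertices to $\{1,2,3\}$ is \emph{not} injective, and the ``containment contradiction'' you rely on does not arise when a type-(I) vertex and a type-(II) vertex share the same index. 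Concretely, take three points in general position almost along a line of slope $\tan 30^{\circ}$, say $p=(0,0)$, $r=(1,0.6)$, $q=(2,1.1)$. Then $G_\bigtriangledown(P)$ is the path $p$--$r$--$q$, so both $p$ and $q$ have degree one; here $p$ is type (I) with index $1$ (all points in $C_1(p)$) and $q$ is type (II) with index $1$ (all points in $\overline{C_1}(q)$). Both are ``extremal along direction $1$,'' yet $q$ lies in a cone of $p$ that is not empty and $p$ lies in a cone of $q$ that is not empty, so no contradiction is available; any reasonable version of your charging sends both vertices to the same index. Hence ``summing over the three cone directions'' does not, by itself, bound the number of degree-one vertices by three.

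The approach is salvageable, but the conclusion must come from an asymmetric compatibility analysis rather than an injection. Each of the following \emph{is} proved by exactly the containment argument you describe: (a) two type-(I) vertices cannot coexist, regardless of their indices (the second lies in $C_i$ of the first, so the first lies in a negative cone of the second, all of which are empty); (b) a type-(I) vertex of index $i$ cannot coexist with a type-(II) vertex of index $j\ne i$; (c) two type-(II) vertices with the same index cannot coexist. Consequently, if any type-(I) vertex exists, every other degree-one vertex is type (II) with that same index, giving at most two degree-one vertices; otherwise all degree-one vertices are type (II) with pairwise distinct indices, giving at most three. Note that this repaired argument is a genuinely different route from the paper's proof, which instead assumes four degree-one vertices $p_1,\dots,p_4$ sorted by $y$-coordinate and derives a contradiction through a case analysis of their relative positions using the same Properties~\ref{obs2}--\ref{obs4}; your route, once the counting is fixed as above, is arguably cleaner and yields the sharper structural information about which configurations of degree-one vertices can actually occur. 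But as proposed, the proof hinges on an injectivity claim that is false.
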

\begin{proof}
We will give a proof by contradiction. Let $p_1, p_2, p_3$ and $p_4$ be four points such that the vertices corresponding to them are of degree one in $G_\bigtriangledown(p)$. Since the points are in general position, without loss of generality, we can assume that these points are given in the bottom to top order of their $y$ co-ordinates. We analyse different relative positionings of $p_2$ and $p_3$ with respect to $p_1$ and prove that in none of these cases, we can properly place all the four points consistently. Since $p_1$ is below $p_2$ and $p_3$, the relative positioning of $p_2$ and $p_3$ should be one of the following :
\begin{itemize}
 \item Case 1 : $p_2 \in \overline{V_3}(p_1)$.
 \item Case 2 : $p_2 \notin \overline{V_3}(p_1)$ but $p_3 \in \overline{V_3}(p_1)$.
 \item Case 3 : $p_2, p_3 \in V_1(p_1)$ or $p_2, p_3 \in V_2(p_1)$.
 \item Case 4 : $p_2 \in V_1(p_1)$, $p_3 \in V_2(p_1)$ or $p_2 \in V_2(p_1)$, $p_3 \in V_1(p_1)$.
\end{itemize}
\textbf{Case 1.} Since $p_2 \in \overline{V_3}(p_1)$, we have $p_1 \in V_3(p_2)$. 
Since $p_2$ is of degree one, by Property \ref{obs3}, $\overline{V_3}(p_2) = \emptyset$. Since $p_3$ and $p_4$ are above $p_2$, and $p_4$ is above $p_3$, we have only the following sub-cases 
to consider. (See Fig. \ref{figs1234}.)
\begin{itemize}
 \item Case 1a. $p_3, p_4 \in V_i(p_2)$ and $p_4 \in V_i(p_3)$, where $i \in \{1, 2\}$. 
 \item Case 1b. $p_3, p_4 \in V_i(p_2)$, where $i \in \{1, 2\}$, and $p_4 \in \overline{V_3}(p_3)$. 
 \item Case 1c. $p_3, p_4 \in V_i(p_2)$ and $p_4 \in V_j(p_3)$, where $i,j \in \{1, 2\}$ and $i \ne j$. 
  \item Case 1d. $p_3 \in V_i(p_2)$, $p_4 \in V_j(p_2)$, where $i, j \in \{1,2\}, i \ne j$. 
\end{itemize}
\begin{figure}[h]
  \centering
  \includegraphics[scale=0.6]{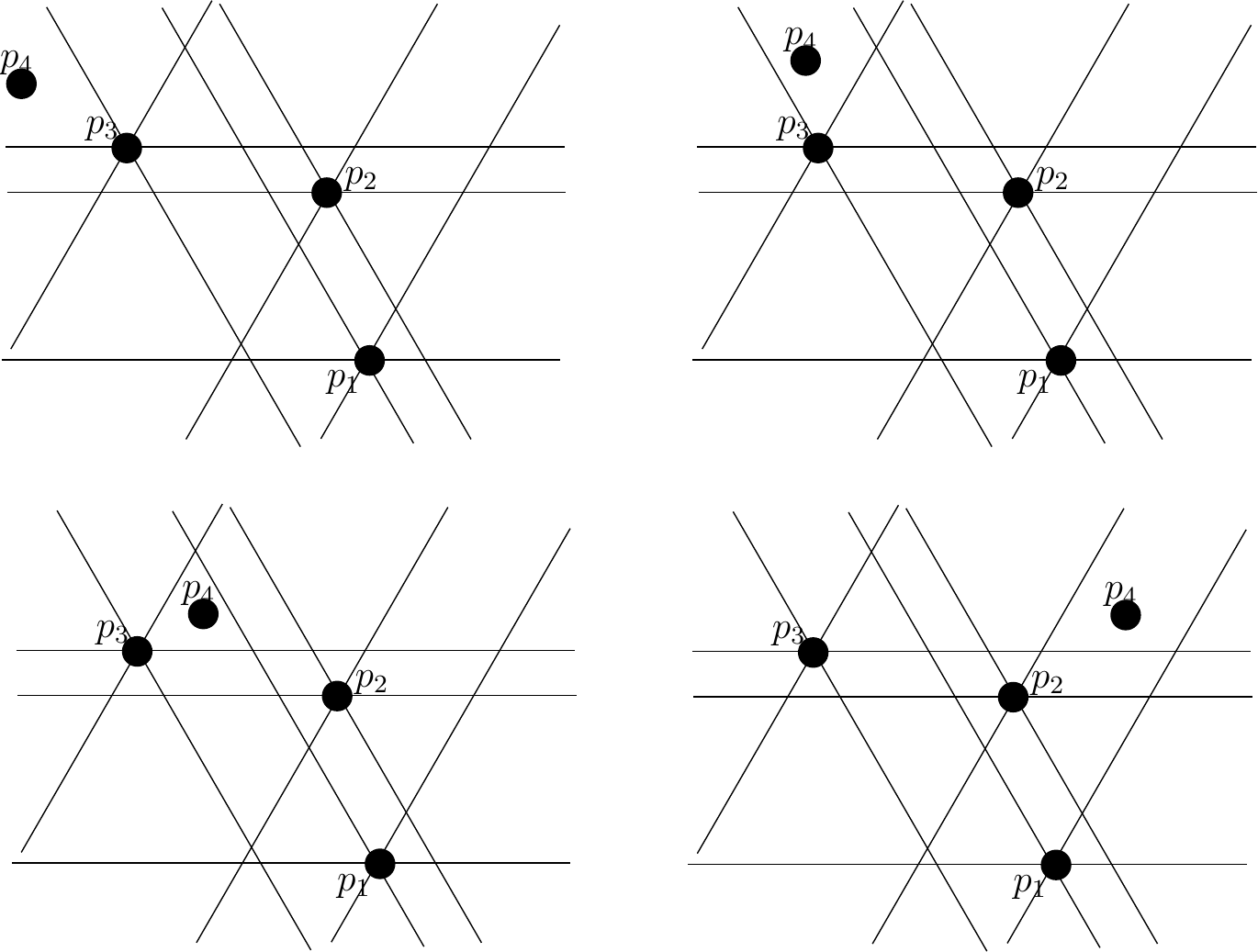}   % this is impossible.pdf
  \caption{Sub-cases of Case 1}
\label{figs1234}
  \end{figure}
Without loss of generality, assume that $i=2$ and $j=1$. \newline \newline
Case 1a : We have $p_3, p_4 \in V_2(p_2)$, implying that $p_2 \in \overline{V_2}(p_3)$ and $p_2 \in \overline{V_2}(p_4)$. 
Since $p_4 \in V_2(p_3)$ and $p_2 \in \overline{V_2}(p_3)$, by Property \ref{obs3}, the degree of $p_3$ is at least two.
This is a contradiction. \newline
Case 1b : We have $p_3, p_4 \in V_2(p_2)$. This implies that $p_2 \in \overline{V_2}(p_3)$ and $p_2 \in \overline{V_2}(p_4)$. 
Since $p_4 \in \overline{V_3}(p_3)$ and $p_2 \in \overline{V_2}(p_3)$, by Property \ref{obs2}, the degree of $p_3$ is at least two. This 
is a contradiction.\newline
Case 1c : We have $p_3, p_4 \in V_2(p_2)$ . This implies that $p_2 \in \overline{V_2}(p_3)$ and $p_2 \in \overline{V_2}(p_4)$.
Since $p_4 \in V_1(p_3)$, we have $p_3 \in \overline{V_1}(p_4)$. Since we already had $p_2 \in \overline{V_2}(p_4)$, 
by Property \ref{obs2}, the degree of $p_4$ is at least two, which is a contradiction. \newline
Case 1d : Since $p_3 \in V_2(p_2)$ and $p_4 \in V_1(p_2)$, by Property \ref{obs4}, we get $\overline{V_3}(p_2) \ne \emptyset$. This contradicts the property $\overline{V_3}(p_2) = \emptyset$, that we made at the beginning of the analysis of Case 1.\newline
\textbf{Case 2.} Without loss of generality, assume that $p_2 \in V_2(p_1)$. Thus, $p_1 \in \overline{V_2}(p_2)$. Since $\overline{V_2}(p_2) \ne \emptyset$, if we have $V_2(p_2) \ne \emptyset$, by Property \ref{obs3}, the degree of $p_2$ is at least two. On the other hand, if $\overline{V_3}(p_2) \ne \emptyset$, by Property \ref{obs2}, the degree of $p_2$ is at least two. Since both cases lead to contradictions, we have $V_2(p_2) = \emptyset$ and $\overline{V_3}(p_2) = \emptyset$. Since $p_3$ and $p_4$ are above $p_2$, this implies that $p_3, p_4 \in V_1(p_2)$. This gives us $p_2 \in \overline{V_1}(p_4)$ and $p_2 \in \overline{V_1}(p_3)$ (See Fig. \ref{figs567}.).
Since $p_2 \in \overline{V_1}(p_3)$ and $p_3$ is of degree one, by Property \ref{obs3}, we get $V_1(p_3) = \emptyset$ and by Property \ref{obs2}, we get $\overline{V_3}(p_3) = \emptyset$. Since $p_4$ is above $p_3$, this implies that $p_4 \in V_2(p_3)$ and hence $p_3 \in \overline{V_2}(p_4)$. Since we already had $p_2 \in \overline{V_1}(p_4)$, by Property \ref{obs2}, the degree of $p_4$ is at least two, which is a contradiction.\newline
\begin{figure}[h]
  \centering
  \includegraphics[scale=0.6]{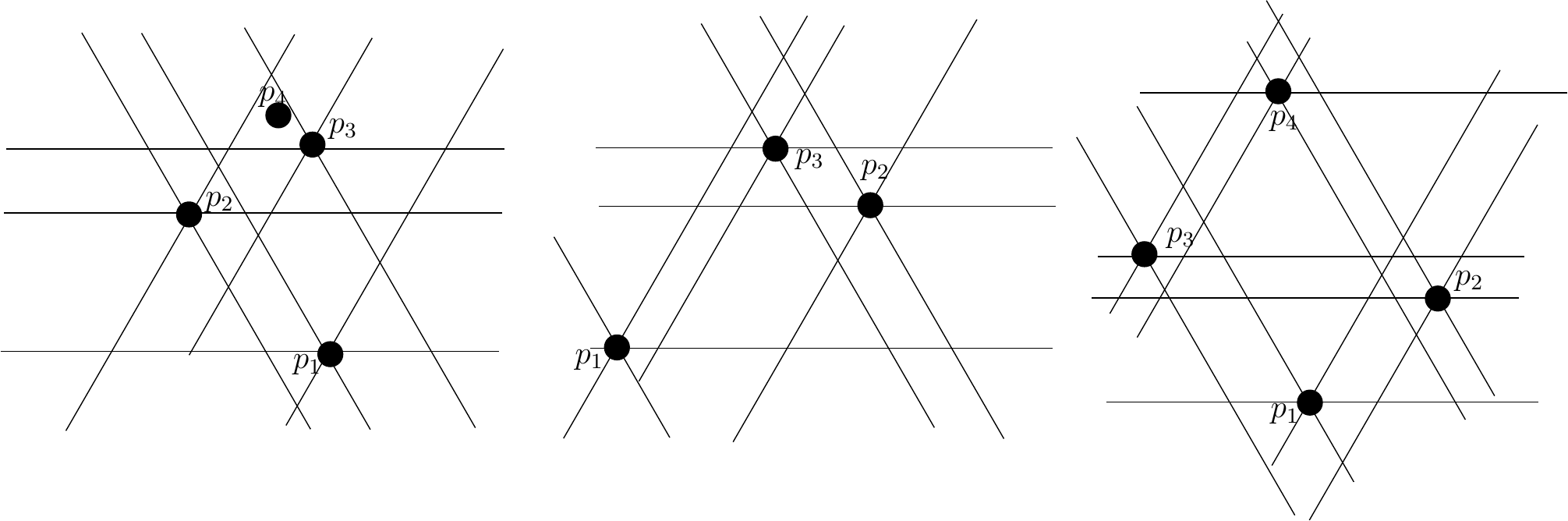}   % this is impossible.pdf
  \caption{Case 2, Case 3 and Case 4.}
\label{figs567}
  \end{figure}
\textbf{Case 3.} We need only consider the situation $p_2, p_3 \in V_1(p_1)$. The other situation is symmetric to this. Since $p_2 \in V_1(p_1)$, we get $p_1 \in \overline{V_1}(p_2)$. Since $\overline{V_1}(p_2) \ne \emptyset$, and $p_2$ is of degree one, by Properties \ref{obs2} and \ref{obs3}, we get $V_1(p_2) = \emptyset$ and $\overline{V_3}(p_2) =\emptyset$. 

Since $p_3$ is above $p_2$, this means that $p_3 \in V_2(p_2)$, which gives $p_2 \in \overline{V_2}(p_3)$. Since $p_3 \in V_1(p_1)$ by assumption, we also have $p_1 \in \overline{V_1}(p_3)$ (See Fig. \ref{figs567}.). These two observations give us $\overline{V_1}(p_3) \ne \emptyset$ and $\overline{V_2}(p_3) \ne \emptyset$. Applying Property \ref{obs2}, it follows that the degree of $p_3$ is at least two, which is a contradiction.\newline
\textbf{Case 4.} We need only consider the situation $p_2 \in V_1(p_1)$ and $p_3 \in V_2(p_1)$. The other situation is symmetric. Since $p_2 \in V_1(p_1)$, we have $p_1 \in \overline{V_1}(p_2)$. Since $p_2$ is of degree one, and $\overline{V_1}(p_2) \ne \emptyset$, by Properties \ref{obs2} and \ref{obs3}, we get $V_1(p_2) = \emptyset$ and $\overline{V_3}(p_2) =\emptyset$. 
Since $p_4$ is above $p_2$, this means that $p_4 \in V_2(p_2)$, which gives $p_2 \in \overline{V_2}(p_4)$. 

Similarly, since $p_3 \in V_2(p_1)$, we have $p_1 \in \overline{V_2}(p_3)$. Since $p_3$ is of degree one, and $\overline{V_2}(p_3) \ne \emptyset$, by Properties \ref{obs2} and \ref{obs3}, we get $V_2(p_3) = \emptyset$ and $\overline{V_3}(p_3) =\emptyset$. Since $p_4$ is above $p_3$, this means that $p_4 \in V_1(p_3)$, which gives $p_3 \in \overline{V_1}(p_4)$. Since we already had $p_2 \in \overline{V_2}(p_4)$, using Property \ref{obs2}, it follows that the degree of $p_4$ is at least two, which is a contradiction (See Fig. \ref{figs567}.).

Thus in each of the four possible placements of $p_1, p_2, p_3$ and $p_4$, we concluded that the configuration is impossible. This completes the proof.
\qed
\end{proof}
\subsection{Internal triangulation}
In this section, we will prove that for a point set $P$, the plane graph $G_\bigtriangledown(P)$ is internally triangulated. This property will be  used in Section \ref{maxmatch} to derive the lower bound for the cardinality of maximum matchings in $G_\bigtriangledown(P)$.
\begin{lemma}\label{internal}
 For a point set $P$, all the internal faces of $G_\bigtriangledown(P)$ are triangles.
\end{lemma}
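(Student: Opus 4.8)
The plan is to proceed by contradiction: assume some bounded face $f$ of the plane graph $G_\bigtriangledown(P)$ (Lemma~\ref{lmplanar}) has at least four vertices on its boundary. I would first observe that every bounded face is bounded by a simple cycle: a pendant edge, or any bridge, is traversed twice in a single facial walk and hence is incident only to the unbounded face, so the up-to-three degree-one vertices permitted by Lemma~\ref{degreeone} never sit on a bounded face. Thus I may fix a simple cycle $C$ bounding $f$ and work in the closed region it encloses. Because the general-position assumption forbids a line of slope $0^\circ$ through two points, all $y$-coordinates are distinct, so $C$ has a unique topmost vertex $t$; let $a$ and $b$ be its two neighbours on $C$, both lying strictly below $t$, with $\mathrm{int}(f)$ occupying the downward wedge at $t$ between the edges $ta$ and $tb$.

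The crux is to show that $a$ and $b$ are themselves adjacent, i.e.\ that $\bigtriangledown ab$ contains no other point of $P$. The geometric reason is that $\bigtriangledown$ points downward: among three points carried one per side by an empty down-triangle, the highest lies on the horizontal top edge and the other two on the slanted edges, so the circumscribing down-triangle of the triple $\{t,a,b\}$ carries $t$ on its top edge and $a,b$ on its lower edges, and $\bigtriangledown ab$ is exactly this triangle with its top edge lowered to the height of the higher of $a,b$; in particular $\bigtriangledown ab$ lies below $t$ inside the empty circumscribing triangle. To make this rigorous without presupposing the circumscribing triangle, I would suppose some $z\in P\setminus\{a,b\}$ lay in $\bigtriangledown ab$ and invoke Lemma~\ref{pathintriangle}: since down-triangles are intersections of three half-planes with fixed normals, $\bigtriangledown az\cup\bigtriangledown zb\subseteq\bigtriangledown ab$, so there is a path from $a$ to $b$ through $z$ lying inside $\bigtriangledown ab$. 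Combining this path with the two empty triangles $\bigtriangledown ta$, $\bigtriangledown tb$ and the fact that $\mathrm{int}(f)$ contains no point of $P$, I would force $z$ either into $\mathrm{int}(f)$ or across $C$, both impossible (the latter by planarity). Hence $\bigtriangledown ab$ is empty and $ab\in E(G_\bigtriangledown(P))$.

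It then remains to derive the contradiction. Since $t$ is the topmost vertex of $C$ and $a,b$ lie strictly below it, the two edges $ta,tb$ point into the open lower half-plane, so the facial interior angle at $t$ is less than $\pi$; thus the straight segment $ab$ leaves $t$'s convex corner into $\mathrm{int}(f)$, and by planarity (Lemma~\ref{lmplanar}) it cannot cross $C$, so its relative interior lies in $\mathrm{int}(f)$. As $|C|\ge 4$, the edge $ab$ is a diagonal of $C$ rather than one of its edges, so it subdivides $f$, contradicting that $f$ is a single face. Therefore $|C|=3$ and $f$ is a triangle. I expect the main obstacle to be the rigorous emptiness argument of the middle paragraph: one must split into the cases recording which of the lower cones $\overline{C_1}(t),C_3(t),\overline{C_2}(t)$ contain $a$ and $b$, verify in each that $\bigtriangledown ab$ is trapped below $t$ between the two boundary edges, and check that an interior point $z$ genuinely forces the in-triangle path either into $f$ or across its boundary.
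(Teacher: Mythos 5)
Your high-level plan (topmost vertex of the face, its two face neighbours, force the third edge) is the same as the paper's, but the triangle you chose to control is the wrong one, and this breaks the argument at both of its load-bearing steps. The paper proves emptiness of the \emph{large} down-triangle bounded by the horizontal line through the top vertex $t$ and the two slanted lines through its neighbours; that triangle contains the whole geometric triangle $tab$. Your triangle $\bigtriangledown ab$ has its top side at height $\max(y_a,y_b)<y_t$, so it sees nothing of the strip between that height and $t$. Concretely: (i) your emptiness argument fails because the path from $a$ to $b$ guaranteed by Lemma~\ref{pathintriangle} can simply run along the \emph{other} arc of $C$. If the face were a convex quadrilateral $t,a,w,b$ with the fourth vertex $w$ below both $a$ and $b$ and inside $\bigtriangledown ab$, the path through $z=w$ is just $a,w,b$ along $C$; it never enters $\mathrm{int}(f)$ and never crosses $C$, so the contradiction you announce does not materialize. (ii) Even granting that $\bigtriangledown ab$ is empty and $ab$ is an edge, your final step fails: take a quadrilateral face $t,a,w,b$ with $w$ \emph{inside} the geometric triangle $tab$ but above both $a$ and $b$ (reflex corner at $w$). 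Then $w\notin\bigtriangledown ab$, so this configuration is fully consistent with everything you established; yet the segment $ab$ runs below the edges $aw$ and $wb$, i.e., outside $f$, so it is not a diagonal of $f$ and no subdivision contradiction arises. Your justification that ``$ab$ leaves $t$'s convex corner into $\mathrm{int}(f)$'' only inspects the geometry near $t$, which says nothing about where the segment $ab$, far from $t$, actually lies. The paper avoids both failures by anchoring the path argument at the apex: a hypothetical point $x$ inside the \emph{big} triangle yields a path from $t$ to $x$, and already its \emph{first} edge $tv_2$ must leave $t$ inside the cone between the two chosen face edges, so by planarity $v_2$ is a face neighbour of $t$ strictly between them in angular order, contradicting that the two neighbours were chosen angularly consecutive. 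That local-at-the-apex step is exactly the idea your sketch is missing.

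There is also a secondary gap at the very start: your reduction to simple cycles rests on the claim that a bridge ``is incident only to the unbounded face.'' That is false for plane graphs in general (a pendant edge drawn inside a cycle is traversed twice by the facial walk of a \emph{bounded} face), and excluding such configurations in $G_\bigtriangledown(P)$ is essentially part of what Lemma~\ref{internal} asserts, so you cannot assume it. The paper never assumes the face boundary is a simple cycle; it only uses that the top vertex has at least two neighbours on the boundary of $f$ and picks the two extreme ones in the $120^{\circ}$-direction ordering, which is why its argument survives non-simple boundaries.
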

\begin{proof}
Consider an internal face $f$ of $G_\bigtriangledown(P)$. We need to show that $f$ is a triangle. Let $p$ be the vertex with the highest $y$-coordinate among the vertices on the boundary of $f$. Since $f$ is an internal face, $p$ has at least two neighbours on the boundary of $f$. Let $q$ and $r$ be the neighbours of $p$ on the boundary of $f$ such that $r$ is to the right of the line passing through $q$ and making an angle of $120^{\circ}$ with the horizontal and any other neighbour of $p$ on the boundary of $f$ is to the right of the line passing through $r$ and making an angle $120^{\circ}$ with the horizontal. Because of the general position assumption, $q$ and $r$ can be uniquely determined. 

We will prove that $qr$ is also an edge on the boundary of $f$ and there is no point from $P$ in the interior of the triangle whose vertices are $p, q$ and $r$. This will imply that the face $f$ is the triangle whose vertices are $p, q$ and $r$. 

We know that $q, r \in \overline{C_1}(p) \cup \overline{C_2}(p) \cup C_3(p)$. By Property \ref{obs2},
it cannot happen that both $q, r \in \overline{C_i}(p)$, for any $ i \in \{1, 2\}$.
Other possibilities are shown in Fig. \ref{figtriangles}, where $q$ is assumed to be above $r$.
An analogous argument can be made when $r$ is above $q$ as well.
\begin{figure}
\centering
  \includegraphics[scale=0.58]{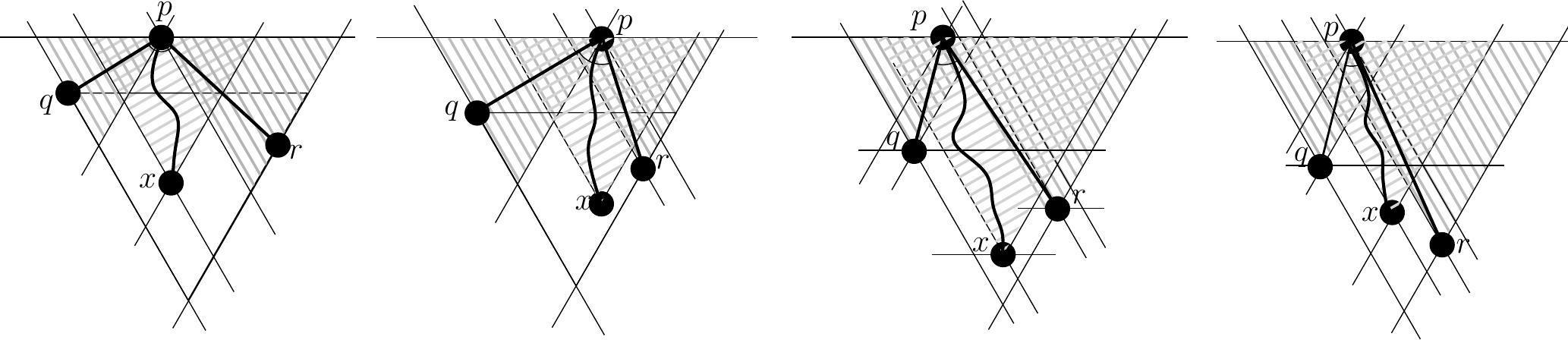}   % this is impossible.pdf
  \caption{Case 1. $q \in \overline{C_1}(p)$ and $r \in \overline{C_2}(p)$, Case 2. $q \in \overline{C_1}(p)$ and $r \in C_3(p)$, Case 3. $r \in \overline{C_2}(p)$ and $q \in C_3(p)$, Case 4. $q, r \in C_3(p)$.}
\label{figtriangles}
  \end{figure}
Since $pq$ and $pr$ are edges in $G_\bigtriangledown(P)$, we know that $\bigtriangledown pq \cap (P \setminus \{p, q\}) =\emptyset$ and
$\bigtriangledown pr \cap (P \setminus \{p, r\})=\emptyset$. 

Notice that, the area bounded by the lines (1) the horizontal line passing through $p$, (2) the line passing through $q$ and making an angle of $120 ^{\circ}$ with the horizontal, and (3) the line passing through $r$ and making an angle of $60 ^{\circ}$ with the horizontal, will define an equilateral down triangle with $p$, $q$ and $r$ on its boundary. Let us denote this triangle by $\bigtriangledown pqr$.
\begin{claim}
$\bigtriangledown pqr \cap (P\setminus \{p, q, r\}) =\emptyset$ .
\end{claim}
\begin{proof}
For contradiction, let us assume that there exists a point $x \in \bigtriangledown pqr \cap (P\setminus \{p, q, r\})$. Because of the general position assumption, $x$ cannot be on the boundary of $\bigtriangledown pqr$. Therefore, $\bigtriangledown px$ does not contain $q$ and $r$. By Lemma \ref{pathintriangle}, in $G_\bigtriangledown(P)$,
there exists a path between $p$ and $x$ which lies inside $\bigtriangledown px$. Let this path be $X=v_1 v_2, \ldots, v_k=x$. Since $\bigtriangledown pq \cap P \setminus \{p, q\} =\emptyset$, $\bigtriangledown pr \cap P \setminus \{p, r\}=\emptyset$ and $q, r \notin \bigtriangledown px$, we know that all vertices in the path $X=v_1 v_2, \ldots, v_k=x$ lie inside
the region $R = (\bigtriangledown px \setminus (\bigtriangledown pq \cup \bigtriangledown pr)) \cup \{p\}$. 

Let $C$ be the cone with apex $p$ bounded by the rays $pq$ and $pr$. Observe that for any point $v \in R$, the line segment
$pv$ lies inside the cone $C$. Since $v_2 \in R$ and $pv_2$ is an edge (in the path from $p$ to $x$), the line segment corresponding to the edge $pv_2$ lies inside $C$ in $G_\bigtriangledown(P)$. 

If the point $v_2$ is outside the face $f$, edge $pv_2$ will cross the boundary of $f$, which is contradicting
the planarity of $G_\bigtriangledown(P)$. Since $v_2$ cannot be outside the face $f$, the edge $pv_2$ belongs to the boundary of $f$. Since $v_2$ lies inside the cone $C$ and $v_2\in R$, this means that $v_2$ is a neighbour of $p$ on the boundary of $f$ such that $v_2$ is to the left of the the line passing through $r$ and making an angle of $120 ^{\circ}$ with the horizontal. This is a contradiction to our assumption that $q$ is the only neighbour of $p$ on the boundary of $f$, lying to the left of the the line passing through $r$ and making an angle of $120 ^{\circ}$ with the horizontal. 
\qed
\end{proof}
Let us continue with the proof of Lemma \ref{internal}. Since the triangle with vertices $p, q$ and $r$ is inside the triangle $\bigtriangledown pqr$, from the above claim, it is clear that
there is no point from $P$, other than the points $p, q$ and $r$, inside the triangle whose vertices are $p, q$ and $r$. Since the edges $pq$ and $pr$ belong to the boundary of $f$, to show that $f$ is a triangle, it is now enough to prove that $qr$ is also an edge in $G_\bigtriangledown(P)$. This fact also follows from the above claim as explained below.

Since $\bigtriangledown qr \subseteq \bigtriangledown pqr$, by the claim above, $\bigtriangledown qr$ cannot contain any point from $P$ other than $p, q$ and $r$. Moreover, since $p$ lies above $q$ and $r$, we know that $p \notin \bigtriangledown qr$. Therefore, $\bigtriangledown qr \cap (P \setminus \{q, r\}) = \emptyset$. Therefore, $qr$ is an edge in $G_\bigtriangledown(P)$.

Thus, $f$ has to be a triangle bounded by the edges $pq$, $qr$ and $pr$. 
\qed
\end{proof}
\begin{corollary}\label{outercutvertex}
 For a point set $P$, all the cut vertices of $G_\bigtriangledown(P)$ lie on its outer face. 
\end{corollary}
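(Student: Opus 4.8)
The plan is to argue by contradiction, leaning entirely on the three structural facts already in hand: $G_\bigtriangledown(P)$ is a connected plane graph (Lemmas \ref{lmplanar} and \ref{pathintriangle}), all of whose bounded faces are triangles (Lemma \ref{internal}). Suppose, for contradiction, that some cut vertex $v$ does \emph{not} lie on the outer face of $G_\bigtriangledown(P)$. Then $v$ is completely surrounded by the graph, so every face incident to $v$ is a bounded face, and hence, by Lemma \ref{internal}, a triangle.

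Next I would analyse the rotation of edges around $v$. List the neighbours of $v$ as $u_1, u_2, \ldots, u_d$ in the cyclic (say counter-clockwise) order in which the edges $vu_1, \ldots, vu_d$ leave $v$ in the planar embedding. Between two cyclically consecutive edges $vu_i$ and $vu_{i+1}$ (indices taken mod $d$) lies exactly one face incident to $v$; since this face is a triangle having $v$ as a vertex and $vu_i$, $vu_{i+1}$ as two of its sides, its third side must be the edge $u_i u_{i+1}$. Thus every pair of cyclically consecutive neighbours of $v$ is joined by an edge, and $u_1 u_2 \cdots u_d u_1$ is a closed walk that uses no edge incident to $v$.

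Consequently this cycle through all the neighbours of $v$ survives intact in $G_\bigtriangledown(P) - v$, so the neighbours of $v$ all lie in a single connected component of $G_\bigtriangledown(P) - v$. Because $G_\bigtriangledown(P)$ is connected, every connected component of $G_\bigtriangledown(P) - v$ must contain at least one neighbour of $v$; combined with the previous sentence, this forces $G_\bigtriangledown(P) - v$ to have exactly one component. This contradicts the assumption that $v$ is a cut vertex, and therefore every cut vertex of $G_\bigtriangledown(P)$ must lie on the outer face.

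The only delicate point, and the step I expect to demand the most care, is the claim that the face between two consecutive edges at $v$ is genuinely a triangle whose third side joins the two corresponding neighbours. One has to rule out the degenerate low-degree configurations: a strictly interior vertex cannot have degree one (a degree-one vertex is an endpoint and necessarily borders the outer face), and it cannot have degree two either, since two triangular faces sharing both edges $vu_1$ and $vu_2$ would coincide and again place $v$ on the outer boundary. With $d \ge 3$, and using that Lemma \ref{internal} guarantees every bounded face is an honest (unsubdivided) triangle, the identification of the third vertex of each surrounding face as exactly $u_i$ or $u_{i+1}$ is forced, which is what makes the neighbour-cycle argument go through.
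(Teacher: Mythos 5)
Your proof is correct and is essentially the paper's own argument: both use Lemma \ref{internal} to show that the neighbours of any vertex $v$ not on the outer face form a cycle (the link of $v$) surviving in $G_\bigtriangledown(P) \setminus v$, so that deleting $v$ cannot disconnect the graph. One minor remark: your side claim that a degree-one vertex ``necessarily borders the outer face'' is false for general plane graphs, but it is harmless here, since a cut vertex cannot have degree one anyway, and an interior pendant vertex would force a bounded non-triangular face, contradicting Lemma \ref{internal}.
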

\begin{proof}
 Consider any vertex $v$ of $G_\bigtriangledown(P)$ which is not on its outer face. Since $G_\bigtriangledown(P)$ is internally triangulated, each neighbour of $v$ in $G_\bigtriangledown(P)$ lies on a cycle in the graph $G_\bigtriangledown(P) \setminus v$. Since $G_\bigtriangledown(P)$
is connected, $G_\bigtriangledown(P) \setminus v$ remains connected. Thus, $v$ cannot be a cut vertex. 
\qed
\end{proof}
\section{Maximum matching in $G_\bigtriangledown(P)$}\label{maxmatch}
In this section, we show that for any point set $P$ of $n$ points, $G_\bigtriangledown(P)$ contains a matching of size $\lceil\frac{n-2}{3} \rceil$; i.e, at least $2(\lceil\frac{n-2}{3} \rceil)$ vertices are matched. Consider a point set $P$ containing $n$ points.  If we have only two points in $P$, then the graph contains a perfect matching. Hence, we assume that $|P| \ge 3$. 

We construct a graph $G'$ such that it is a $2$-connected planar graph of minimum degree at least $3$ and then make use of the following theorem of Nishizeki \cite{Nishi} to get a lower bound on the size of a maximum matching of $G'$. Using this, we will then derive a lower bound on the size of a maximum matching of $G_\bigtriangledown(P)$.
\begin{theorem}[\cite{Nishi}]\label{thmnishi}
 Let $G$ be a connected planar graph with $n$ vertices having minimum degree at least $3$ and let $M$ be a maximum matching in $G$. Then,
$$
|M| \ge \left\{ \begin{array}{rl}
 \lceil \frac{n+2}{3} \rceil &\mbox{when $n \ge 10$ and $G$ is not 2-connected} \\
  \lceil \frac{n+4}{3}\rceil &\mbox{when $n \ge 14$ and $G$ is 2-connected} \\
  \lfloor \frac{n}{2}\rfloor &\mbox{otherwise}  
       \end{array} \right.
$$
\end{theorem}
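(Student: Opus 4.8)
The plan is to reformulate the matching lower bound as an upper bound on the \emph{deficiency} $\mathrm{def}(G) = n - 2|M|$, the number of vertices left exposed by a maximum matching, and to control it through the Tutte--Berge formula
\[
\mathrm{def}(G) = \max_{S \subseteq V(G)} \bigl( o(G - S) - |S| \bigr),
\]
where $o(G-S)$ denotes the number of odd components of $G - S$. I would fix a set $S$ attaining this maximum and write $s = |S|$ and $q = o(G-S)$, so that $\mathrm{def}(G) = q - s$; it then suffices to show that $q - s$ is small, which converts the whole problem from ``finding a large matching'' into ``bounding the number of odd components created by a separator''.

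First I would classify the odd components of $G - S$ by size. Each such component has an odd number of vertices, so it is either a single vertex or has at least $3$ vertices; let $a$ be the number of singleton components. Counting vertices gives $n \ge s + a + 3(q - a) = s + 3q - 2a$, hence $q \le (n - s + 2a)/3$ and therefore $\mathrm{def}(G) \le (n - 4s + 2a)/3$. The crux is to bound $a$ in terms of $s$. A singleton component is a vertex all of whose neighbours lie in $S$, and since the minimum degree is at least $3$ it has at least $3$ distinct neighbours there. The subgraph of $G$ consisting only of the edges between the $a$ singletons and $S$ is simple, bipartite and planar, so (whenever $a + s \ge 3$) Euler's formula bounds its edge count by $2(a+s) - 4$; comparing with the lower bound $3a$ gives $a \le 2s - 4$, and substituting this yields the key inequality $\mathrm{def}(G) \le (n-8)/3$.

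Next I would run a case analysis on $s$ to recover the three regimes. If $s = 0$, connectedness forces $o(G) \le 1$, so $\mathrm{def}(G) \le 1$ and $|M| \ge \lfloor n/2 \rfloor$. If $s = 1$, the edge inequality $3a \le 2(a+1) - 4$ forces $a = 0$, whence $n \ge 1 + 3q$ and $\mathrm{def}(G) \le (n-4)/3$, giving $|M| \ge \lceil (n+2)/3 \rceil$; but a barrier with $s = 1$ and $q \ge 2$ exhibits a cut vertex, so this regime can occur only when $G$ is \emph{not} $2$-connected. If $s \ge 2$, the bound $a \le 2s-4$ applies and gives $\mathrm{def}(G) \le (n-8)/3$, i.e.\ $|M| \ge \lceil (n+4)/3 \rceil$. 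Thus $2$-connectivity is precisely the hypothesis that removes the $s = 1$ regime and upgrades the weak bound to the strong one.

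The step I expect to be most delicate is not the planarity input but the threshold bookkeeping that produces the exact constants $n \ge 10$ and $n \ge 14$ together with the ``otherwise'' clause. Because $\mathrm{def}(G)$ is an integer, the estimate $\mathrm{def}(G) \le (n-4)/3$ forces $\mathrm{def}(G) \ge 2$ only once $n \ge 10$, and $\mathrm{def}(G) \le (n-8)/3$ permits $\mathrm{def}(G) \ge 2$ only once $n \ge 14$; below these thresholds every case collapses to $\mathrm{def}(G) \le 1$ and hence to $|M| \ge \lfloor n/2 \rfloor$. Making this rounding argument airtight, handling the degenerate configurations with $a + s < 3$ where the planar edge bound does not directly apply, and verifying that a maximizing barrier with the claimed structure may be assumed, are where the genuine care is required.
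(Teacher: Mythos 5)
You should first be aware that the paper contains no proof of this statement at all: Theorem~\ref{thmnishi} is quoted verbatim from Nishizeki's work \cite{Nishi} and used as a black box, so there is no in-paper argument to compare yours against. Judged on its own merits, your proposal is correct and essentially complete; it is in spirit a streamlined reconstruction of the original Nishizeki--Baybars proof, which likewise proceeds via the deficiency form of the Tutte--Berge formula combined with Euler-formula counting in planar graphs. Your key computations check out: with $S$ a maximizing barrier, $s=|S|$, $q=o(G-S)$ and $a$ the number of singleton components, the vertex count $n \ge s + a + 3(q-a)$ gives $\mathrm{def}(G) \le (n-4s+2a)/3$; the bipartite planar bound $3a \le 2(a+s)-4$ gives $a \le 2s-4$ and hence $\mathrm{def}(G) \le (n-8)/3$ whenever $s \ge 2$; the case $s=1$ forces $a=0$ and $\mathrm{def}(G) \le (n-4)/3$, and a barrier with $s=1$, $q\ge 2$ exhibits a cut vertex, so $2$-connectivity eliminates exactly this regime; the case $s=0$ gives $\mathrm{def}(G)\le 1$ by connectivity. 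The threshold arithmetic is also right: $(n-4)/3 \ge 2$ exactly when $n \ge 10$, $(n-8)/3 \ge 2$ exactly when $n \ge 14$, and below these thresholds integrality collapses everything to $\mathrm{def}(G) \le 1$, i.e.\ $|M| \ge \lfloor n/2 \rfloor$. Two of the points you flagged as delicate are in fact immediate, which makes the argument airtight: the degenerate configuration $a+s<3$ never arises, because any singleton needs three distinct neighbours inside $S$, so $a \ge 1$ already forces $s \ge 3$ (and when $a=0$ the planar edge bound is not needed); similarly, for $s=1$ you need not invoke the edge inequality, since a singleton with three neighbours in a one-element set is impossible outright. The only external ingredients you rely on, the Tutte--Berge formula and the bound of $2v-4$ edges for simple bipartite planar graphs on $v \ge 3$ vertices (which holds also for disconnected such graphs), are standard, so your proof stands as a valid self-contained replacement for the citation.
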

Initialize $G'$ to be the same as $G_\bigtriangledown(P)$. Consider a simple closed curve $\mathcal{C}$ in the plane such that (1) the entire graph $G_\bigtriangledown(P)$ (all vertices and edges) lies inside the bounded region enclosed by $\mathcal{C}$, (2)
the vertices of $G_\bigtriangledown(P)$ which lie on $\mathcal{C}$ are precisely the degree-one vertices of $G_\bigtriangledown(P)$,
(3) except for the end points, every edge of $G_\bigtriangledown(P)$ lies in the interior of the bounded region enclosed by $\mathcal{C}$.

Let the degree-one vertices of $G_\bigtriangledown(P)$ be denoted by $p_0, p_1, \ldots, p_{k-1}$. In the previous section, we proved that $k \le 3$. 

If $k \ge 2$, the region of the outer face of $G_\bigtriangledown(P)$ bounded by the curve $\mathcal{C}$ can be divided into $k$ regions $R_0,\ldots, R_{k-1}$ where $R_i$ is the region bounded by the edge at $p_{i}$, the edge at $p_{(i+1)\mod k}$, the boundary of the outer face of $G_\bigtriangledown(P)$ and the curve $\mathcal{C}$. See Fig. \ref{fignewgraph}. (Here onwards, in this subsection we assume that indices of vertices and regions are taken modulo $k$.) Notice that every vertex on the outer-face of $G_\bigtriangledown(P)$ lies on at least one of these regions and $p_i$ lies on the regions $R_i$ and $R_{i-1}$, for $0 \le i \le k-1$. We insert $k$ new vertices $x_0, \ldots, x_{k-1}$ into $G'$. (To visualize the abstract graph $G'$, vertex $x_i$ may be assumed to lie on the boundary of the region $R_i$, a point distinct from $p_{i}$ and $p_{i+1}$.) New edges are added between $x_i$ and $x_{i+1}$, for $0\le i\le k-1$. We also insert new edges into $G'$ between each $x_i$ and all the vertices of $G_\bigtriangledown(P)$ which lie on the region $R_i$, for $0 \le i \le k-1$. This 
transformation maintains planarity. (Edges between new vertices and old vertices can be drawn inside the corresponding region $R_i$. The edges among the new vertices can be drawn outside these regions, except at their end points.)

\begin{figure}[h]
  \centering
  \includegraphics[scale=0.6]{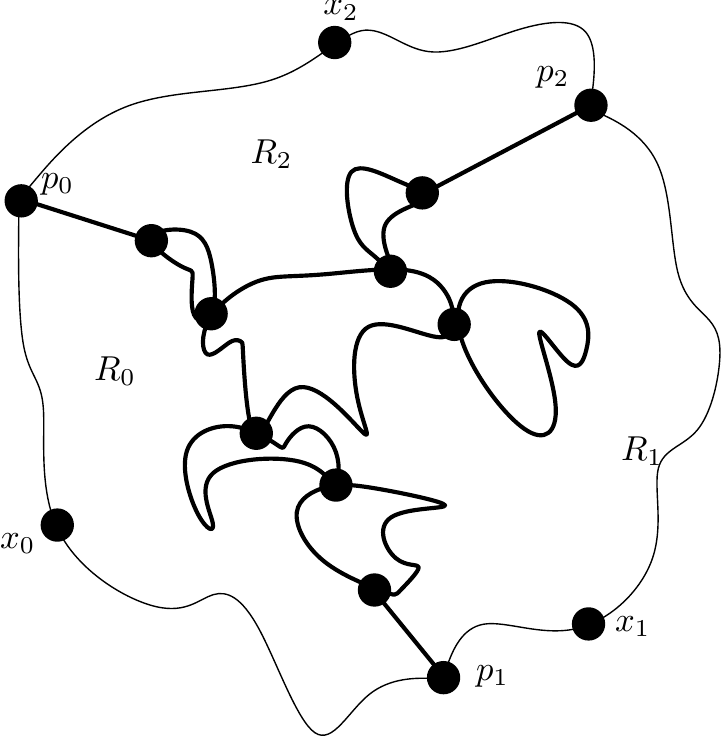}   % this is impossible.pdf
  \caption{Regions on the outer face of $G_\bigtriangledown(P)$.}
\label{fignewgraph}
  \end{figure} 

Each degree-one vertex $p_i$, $0 \le i \le k-1$, of $G_\bigtriangledown(P)$ lies on two regions $R_i$ and $R_{i-1}$, in $G'$ it gets two new edges; one to $x_i$ and the other to $x_{i-1}$. Thus the degree of $p_i$ becomes $3$. All other vertices on the outer face of $G_\bigtriangledown(P)$ were of degree at least two. Since they belong to at least one of the regions $R_0, \ldots, R_{k-1}$, they get at least one new edge in $G'$ and their degree is at least three in $G'$. Since $G_\bigtriangledown(P)$ is an internally triangulated planar graph, we know that all vertices except those on the outer face had degree at least $3$. These vertices maintain the same degrees in $G'$ as in $G$. The degree of $x_i$, $0 \le i \le k-1$, is also at least $3$ in $G'$, because it is adjacent to $p_i, p_{i+1}$ and at least one more vertex on the outer face of $G_\bigtriangledown(P)$. Thus, $G'$ has minimum degree at least three. 

If $k = 0$ or $1$, the modification of $G'$ is similar but rather simpler. We insert a new vertex $x$ in the outer face of $G'$ and add edges between $x$ and all other vertices in the outer face of $G_\bigtriangledown(P)$. This transformation maintains planarity. As earlier, all vertices in $G'$ except the vertex $p_0$ (present only when $k=1$) have degree at least three now. 

If $k=1$, the degree of $p_0$ has become two in $G'$ at this stage. In this case, let $f$ be a face of the current graph $G'$ which contains $p_0$ and $x$. Modify $G'$ by inserting a new vertex $y$ inside $f$ and adding edges from this new vertex to all other vertices belonging to $f$. As earlier, this transformation maintains planarity. Now, the degree of $p_0$ becomes $3$.

\begin{claim}
The graph $G'$ is $2$-connected. 
\end{claim}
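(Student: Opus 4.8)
The plan is to establish $2$-connectivity of $G'$ by verifying the three defining conditions: $G'$ is connected, has at least three vertices, and has no cut vertex. Since $G'$ is built from the connected graph $G_\bigtriangledown(P)$ (which has $|P|\ge 3$ vertices) by \emph{adding} vertices and edges, connectivity and the size requirement are immediate, so the entire task reduces to ruling out cut vertices. I would organise this according to the type of a deleted vertex $v$: (i) a newly inserted vertex (one of the $x_i$ when $k\ge 2$, or $x$ and $y$ when $k\le 1$); (ii) an interior vertex of $G_\bigtriangledown(P)$; and (iii) an outer-face vertex of $G_\bigtriangledown(P)$, a class that also contains the degree-one vertices $p_i$. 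In every case the goal is to show $G'\setminus v$ is connected.

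Cases (i) and (ii) are routine. Deleting a new vertex leaves $G_\bigtriangledown(P)$ intact, and every surviving new vertex remains attached to it (each $x_i$ is adjacent to the outer-face vertices $p_i,p_{i+1}$, while $x$ and $y$ retain original neighbours), giving connectivity at once. If $v$ is an interior vertex, Corollary~\ref{outercutvertex} tells us $G_\bigtriangledown(P)\setminus v$ is already connected, and the new vertices stay joined to the surviving outer-face vertices, so $G'\setminus v$ is connected.

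The crux, and the step I expect to be the main obstacle, is case (iii): such a $v$ may itself be a cut vertex of $G_\bigtriangledown(P)$, so $G_\bigtriangledown(P)\setminus v$ can break into several components. The key structural fact I would prove is that \emph{every} component of $G_\bigtriangledown(P)\setminus v$ contains a vertex on the outer face of $G_\bigtriangledown(P)$. To see this I would invoke the internal triangulation (Lemma~\ref{internal}): list the neighbours of $v$ in rotational order around $v$; each internal face at $v$ is a triangle, so two neighbours bounding an internal face are adjacent. The outer face occupies one or more angular sectors at $v$, and between two consecutive such sectors the neighbours form a contiguous run joined by the resulting triangulation path, whose two endpoints border the outer face and hence lie on the outer boundary. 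Because $G_\bigtriangledown(P)$ is connected, each component of $G_\bigtriangledown(P)\setminus v$ must contain a neighbour of $v$, therefore it contains an entire such run, and therefore an outer-face vertex.

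Granting this fact, the conclusion is quick. Every outer-face vertex of $G_\bigtriangledown(P)$ other than $v$ lies in some region $R_i$ and is thus adjacent in $G'$ to $x_i$ (when $k\ge 2$) or to $x$ (when $k\le 1$), while the added vertices are mutually connected (the $x_i$ form a cycle, and $y$ is adjacent to $x$). Since deleting an original vertex leaves this ``frame'' of new vertices untouched, each component of $G_\bigtriangledown(P)\setminus v$ hangs off the connected frame, so $G'\setminus v$ is connected. The degenerate sub-cases, such as $v=p_0$ or $v$ being the unique neighbour of a degree-one vertex, are absorbed by the same argument, because a degree-one vertex is itself an outer-face vertex adjacent to the frame. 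Having excluded cut vertices of all three types, I conclude that $G'$ is $2$-connected.
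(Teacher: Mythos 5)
Your proof is correct and takes essentially the same route as the paper: the same case split (new vertices, interior vertices, outer-face vertices), the same observation that the inserted vertices form a connected ``frame'' adjacent to every outer-face vertex, and the same key fact that after deleting an outer-face vertex every resulting component still contains an outer-face vertex, which then attaches to that frame. If anything, your triangulation-based ``runs'' argument supplies an explicit justification for this key fact, which the paper's proof merely asserts (``We know that $G_i$ is connected and there exists at least one vertex $v_i$ other than $v$ which lies on the outer face'').
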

\begin{proof}
It is easy to observe that none of the newly inserted vertices can be a cut vertex of $G'$. For any vertex $v$ which was not on the outer face of $G_\bigtriangledown(P)$, the induced subgraph on its neighbours form a cycle in $G'$ as it was in $G_\bigtriangledown(P)$. They cannot be cut vertices. 

Consider any vertex $v$ which was on the outer face of $G_\bigtriangledown(P)$. Suppose $G' \setminus v$ is not connected and let $C_1$ and $C_2$ be two connected components of $G' \setminus v$, with vertex sets $V_1$ and $V_2$ respectively. Let $G_i$ be the induced subgraph of $G_\bigtriangledown(P)$ on vertex set $V_i\cup \{v\}$, for $i \in \{1, 2\}$. We know that $G_i$ is connected and there exists at least one vertex $v_i$ other than $v$ which lies on the outer face of $G_\bigtriangledown(P)$. In $G'$, the vertex $v_i$ has an edge to at least one of the newly inserted vertices. Since the induced subgraph of $G'$ on the newly inserted vertices is connected, in $G'$ we get a path from $v_1$ to $v_2$ in which all the intermediate vertices are newly inserted vertices in $G'$. This means there is path from $C_1$ to $C_2$ in $G' \setminus v$, which is a contradiction. Therefore, no vertex on the outer face of $G_\bigtriangledown(P)$ can be a cut vertex in $G'$ and thus, $G'$ is $2$-connected.
\qed
\end{proof}
Thus, the graph $G'$ is a $2$-connected planar graph of minimum degree at least $3$, having at most $n+3$ vertices. Let $n'=n+k$ be the number of vertices of $G'$. By Theorem \ref{thmnishi}, the cardinality of a maximum matching $M'$ in $G'$ is at least $\left\lceil \frac{n'+4}{3}\right \rceil$ when $n' \ge 14$ and $|M'| \ge \lfloor \frac{n}{2} \rfloor$, otherwise. Since $G_\bigtriangledown(P)$ is an induced subgraph of $G'$, if we delete the edges in $M'$ which have at least one end point which is not in $P$, we get a matching $M$ of $G_\bigtriangledown(P)$. We have $|M| \ge |M'| - k$, where $k = n' - n$ with $k \in \{0, 1, 2, 3 \}$.  
From this, we get, 
$$
|M| \ge \left\{ \begin{array}{rl}
 \left\lceil \frac{n+4-2k}{3}\right \rceil &\mbox{when $n \ge 14 -k$} \\
 \left\lfloor \frac{n-k}{2} \right\rfloor &\mbox{otherwise}  
       \end{array} \right.
$$
Whenever $n \ge 7$, from the above inequality, we get $|M| \ge \left\lceil\frac{n-2}{3}\right\rceil \ge 2$. When $n \ge 5$, Lemma \ref{degreeone}
implies that $G_\bigtriangledown(P)$ cannot be a star with $n-1$ leaves and hence $|M| \ge 2$. When $n >1$, since  $G_\bigtriangledown(P)$ is connected, we get $|M| \ge 1$. From this discussion, we can conclude that, in all cases, $|M| \ge \left\lceil\frac{n-2}{3}\right\rceil$.
\begin{theorem}\label{matchingbound}
 For any point set $P$ of $n$ points in general position, $G_\bigtriangledown(P)$ contains a matching of size $\lceil\frac{n-2}{3} \rceil$.
\end{theorem}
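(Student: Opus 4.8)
The plan is to build an auxiliary graph $G'$ that satisfies the hypotheses of Nishizeki's theorem (Theorem~\ref{thmnishi})---namely, a connected (in fact $2$-connected) planar graph of minimum degree at least $3$---and then transfer the matching bound back to $G_\bigtriangledown(P)$ by accounting for how many extra vertices were added. The obstacle that makes a direct application of Theorem~\ref{thmnishi} to $G_\bigtriangledown(P)$ impossible is that $G_\bigtriangledown(P)$ itself is \emph{not} guaranteed to have minimum degree $3$: it can have degree-one and degree-two vertices on its outer face. This is exactly where the two structural lemmas already proved earn their keep. Lemma~\ref{degreeone} tells us there are at most three degree-one vertices, and Lemma~\ref{internal} (internal triangulation) guarantees that every vertex \emph{not} on the outer face already has degree at least $3$. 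So the entire degree deficiency is concentrated on the outer boundary, and it suffices to augment only there.

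\medskip
First I would fix the degree-one vertices $p_0,\dots,p_{k-1}$ with $k\le 3$, enclose the drawing in a simple closed curve $\mathcal{C}$, and carve the region between $\mathcal{C}$ and the outer boundary into $k$ arcs $R_0,\dots,R_{k-1}$, one between each consecutive pair of degree-one vertices. I would then insert one new apex $x_i$ per region and join $x_i$ to every old vertex lying on $R_i$, together with the cycle edges $x_i x_{i+1}$. The point of this construction is a clean degree bookkeeping: each $p_i$ sits on two regions and so gains two edges (reaching degree $3$); each remaining outer vertex---already of degree $\ge 2$ by the absence of any other degree-one vertices---gains at least one edge; and each $x_i$ is adjacent to $p_i$, $p_{i+1}$, and at least one further outer vertex, so also reaches degree $3$. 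The degenerate cases $k=0,1$ need a slightly different (simpler) gadget, and when $k=1$ a single extra vertex leaves $p_0$ at degree $2$, so I would patch that with one more internal apex $y$ inside a face containing $p_0$; this is the kind of routine case-chase I would not grind through in detail but would flag explicitly.

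\medskip
Next I would verify $2$-connectivity of $G'$, which upgrades the bound from $\lceil\frac{n'+2}{3}\rceil$ to the stronger $\lceil\frac{n'+4}{3}\rceil$. The argument is that no newly inserted vertex can be a cut vertex (they form a connected subgraph and each touches the rest), and no old interior vertex can be a cut vertex by internal triangulation (its neighbours form a cycle, echoing Corollary~\ref{outercutvertex}); for an old outer vertex $v$, any claimed splitting of $G'\setminus v$ into components $C_1,C_2$ is contradicted because each component retains an outer vertex attached to the connected block of new apices, yielding a path across. With $G'$ certified as $2$-connected, planar, and of minimum degree $\ge 3$ on $n'=n+k$ vertices, Theorem~\ref{thmnishi} gives $|M'|\ge\lceil\frac{n'+4}{3}\rceil$ for $n'\ge 14$.

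\medskip
Finally, since $G_\bigtriangledown(P)$ is an induced subgraph of $G'$, deleting from $M'$ every edge incident to one of the at most $k\le 3$ new vertices destroys at most $k$ matching edges, so $|M|\ge |M'|-k$. Substituting $n'=n+k$ collapses the bound to $\lceil\frac{n+4-2k}{3}\rceil$, which is minimized over $k\in\{0,1,2,3\}$ at $k=3$, giving $\lceil\frac{n-2}{3}\rceil$. I expect the main obstacle to be not any single hard inequality but the discipline of the degree accounting across all boundary cases---ensuring the gadget genuinely lifts \emph{every} low-degree vertex to degree $3$ while preserving planarity and $2$-connectivity---together with confirming that the small-$n$ residual cases ($n\ge 7$, $n\ge 5$, $n>1$) still meet the target $\lceil\frac{n-2}{3}\rceil$ via connectedness and Lemma~\ref{degreeone}.
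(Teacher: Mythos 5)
Your proposal is correct and follows essentially the same route as the paper's own proof: the identical augmentation of $G_\bigtriangledown(P)$ with at most three new apex vertices (using Lemma~\ref{degreeone} and Lemma~\ref{internal} to localize the degree deficiency to the outer face), the same $2$-connectivity verification, the same application of Theorem~\ref{thmnishi}, and the same transfer $|M|\ge|M'|-k$ yielding $\left\lceil\frac{n+4-2k}{3}\right\rceil\ge\left\lceil\frac{n-2}{3}\right\rceil$. The small-$n$ residual cases you flag are handled in the paper exactly as you indicate, via connectedness and the impossibility of a star.
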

\begin{figure}[h]
  \centering
  \includegraphics[scale=0.6]{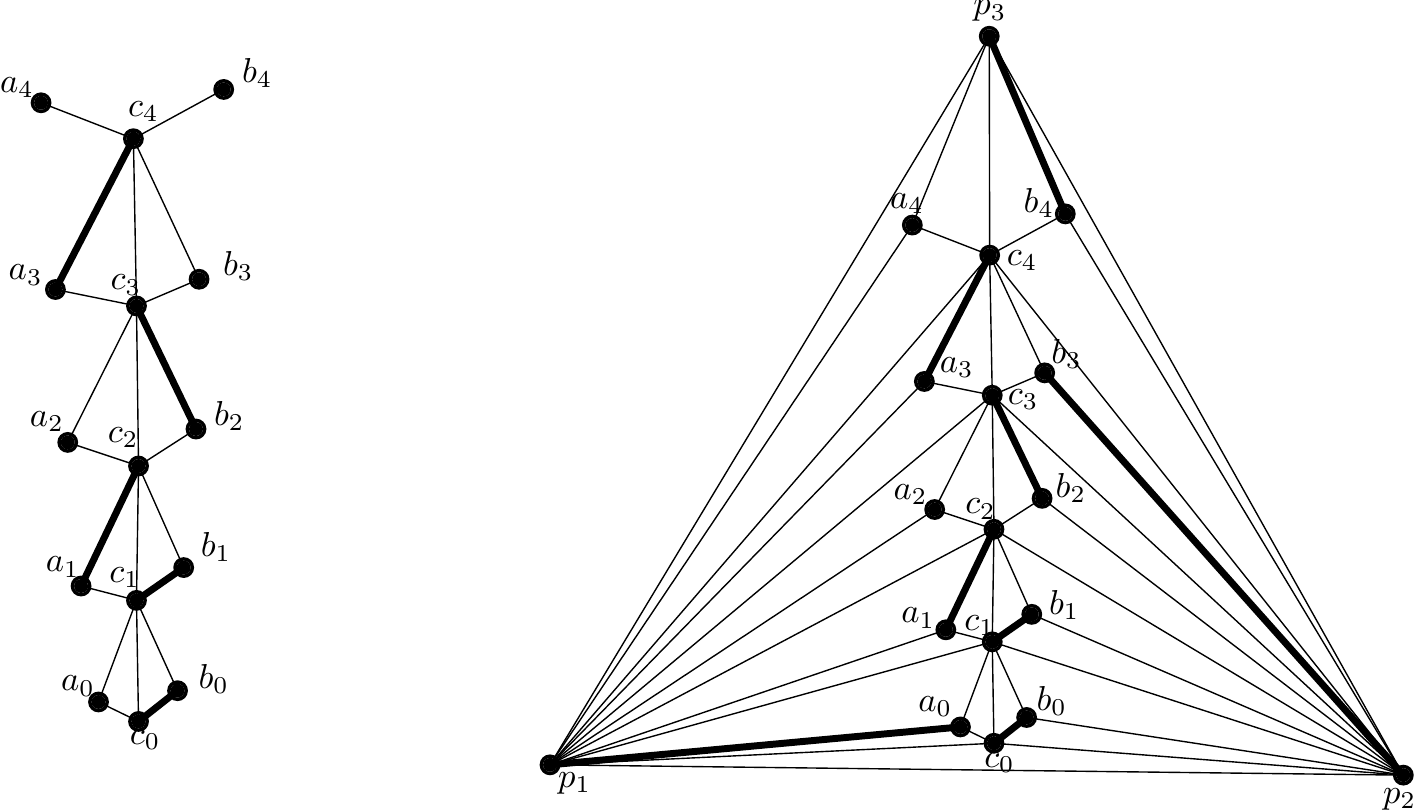}   % this is impossible.pdf
  \caption{(a) A point set $P$ in general position, where $G_\bigtriangledown(P)$ has a maximum matching of size $\lceil\frac{n-2}{3}\rceil$ \cite{Panahi}. (b) A point set $P$ in general position, where $G_\bigtriangledown(P)$ is $3$-connected and has a maximum matching of size $\left\lceil\frac{n+5}{3}\right\rceil$.}
\label{fig15and18vertex}
  \end{figure} 
\paragraph{Some graphs for which our bound is tight:}In Fig. \ref{fig15and18vertex} (a), a point set $P$ consisting of $15$ points and the corresponding graph $G_\bigtriangledown(P)$ is given. This graph has a maximum matching (shown in thick lines) of size $\left\lceil\frac{|P|-2}{3}\right\rceil =5$. This is the same example as given by Panahi et al. \cite{Panahi}. By adding more triplets of points $(a_i, b_i, c_i)$, $i>4$, into $P$, following the same pattern, we can show that for any $n\ge 15$ which is a multiple of $3$, there is a point set $P$ of $n$ points in general position, such that a maximum matching in $G_\bigtriangledown(P)$ is of cardinality $\left\lceil\frac{|P|-2}{3}\right\rceil$.

We can also show that, for any $n \ge 13$, which is one more than a multiple of three, there is a point set $P'$ on $n$ points in general position, such that a maximum matching in $G_\bigtriangledown(P')$ is of cardinality $\left \lceil\frac{|P'|-2}{3}\right\rceil$. For example, take the point set $P'=P \setminus \{a_0, b_0\}$ where $P$ is the point set of triplets described in the paragraph above. However, when $|P|$ is one less than a multiple of three, we do not have an example where our bound is tight. 

Thus, our bound is tight in all cases except when $n$ is one less than a multiple of three. From the examples above, it is clear that no bound better than $\left\lceil\frac{n-1}{3} \right\rceil$ is possible. It remains open whether our bound can be improved to $\left\lceil\frac{n-1}{3}\right\rceil$.
\subsection{A 3-connected down triangle graph without perfect matching}
The example given by Panahi et al. \cite{Panahi}, for a point set $P$ for which $G_\bigtriangledown(P)$ has a maximum matching of size $\left\lceil\frac{n-2}{3}\right\rceil$, contained many cut vertices. However, for general planar graphs, we get a better lower bound for the size of a maximum matching, when the connectivity of the graph increases. By Theorem \ref{thmnishi}, we know that any $3$-connected planar graph on $n$ vertices has a matching of size $\left\lceil\frac{n+4}{3}\right\rceil$, if $n \ge 14$ and has a matching of size $\left\lfloor\frac{n}{2}\right\rfloor$ if $n<14$ or it is 4-connected. Hence, it was interesting to see whether there exist a point set $P$ in general position, with an even number of points, such that $G_\bigtriangledown(P)$ is $3$-connected but does not contain a perfect matching. The answer is positive. 

Consider the graph given in Fig. \ref{fig15and18vertex} (b), which shows a point set $P$ of $18$ points in general position and the corresponding graph $G_\bigtriangledown(P)$. This graph has a maximum matching (shown in thick lines) of size $8$.  We can follow the pattern and go on adding points $a_i$, $b_i$ and $c_i$, for $i >4$ to the point set such that when $P= \{a_0, b_0, c_0, \ldots, a_k, b_k, c_k, p_1, p_2, p_3\}$, $G_\bigtriangledown(P)$ is a $3$-connected graph with a maximum matching of size $\left \lceil\frac{|P|+5}{3} \right\rceil$. It can be verified that $G_\bigtriangledown(P\setminus \{a_0\})$ and $G_\bigtriangledown(P\setminus \{a_0, b_0\})$ are also $3$-connected and their maximum matchings have size $\left\lceil\frac{|P|+5}{3}\right\rceil$. Thus, for $3$-connected down triangle graphs corresponding to point sets in general position, the best known lower bound for maximum matching is $\left\lceil\frac{n+4}{3}\right\rceil$ and the examples we discussed above show that it is not possible to improve the bound above $\left\lceil\frac{n+5}{3}\right\rceil$.  
\section{Some properties of $G_{\davidsstar}(P)$}
In this section, we will prove some properties of the graph $G_{\davidsstar}(P)$. 
\subsection{Connectivity}
For a point set $P$, it is easy to see that $G_{\davidsstar}(P)$ is connected because it is the union $G_{\bigtriangledown}(P)$ and $G_{\bigtriangleup}(P)$, which are themselves connected graphs by Lemma \ref{pathintriangle}.
\subsection{Number of degree-one vertices}
We will prove that $G_{\davidsstar}(P)$ can have at most two degree-one vertices.
\begin{lemma}\label{bidirectionNeighbour}
Let $P$ be a point set, $p\in P$, and $i \in \{1, 2, 3\}$. In $G_{\davidsstar}(P)$ the vertex $p$ has at least one neighbour in $\overline{V_i}(p)$, if $\overline{V_i}(p) \ne \emptyset$. Similarly, the vertex $p$ has at least one neighbour in ${V_i}(p)$, if ${V_i}(p) \ne \emptyset$. 
\end{lemma}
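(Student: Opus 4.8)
The plan is to exploit the fact, noted just above, that $G_{\davidsstar}(P) = G_\bigtriangledown(P) \cup G_\bigtriangleup(P)$, so that any edge incident to $p$ in either $G_\bigtriangledown(P)$ or $G_\bigtriangleup(P)$ is automatically an edge of $G_{\davidsstar}(P)$. It therefore suffices to produce the required neighbour inside one of the two half-graphs, and both halves are already controlled: Property \ref{obs2} handles down-triangles and negative cones, while its horizontal-reflection counterpart handles up-triangles and positive cones.

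For the first statement, I would simply assume $\overline{V_i}(p) \neq \emptyset$ and apply Property \ref{obs2} to $G_\bigtriangledown(P)$: it states precisely that the point $p' \in \overline{V_i}(p)$ minimising $\overline{c_i}(p, p')$ is a neighbour of $p$ in $G_\bigtriangledown(P)$. Since $G_\bigtriangledown(P)$ is a subgraph of $G_{\davidsstar}(P)$, this $p'$ is also a neighbour of $p$ in $G_{\davidsstar}(P)$, and it lies in $\overline{V_i}(p)$, as required.

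For the second statement, I would invoke the up-triangle analogue of Property \ref{obs2}. Reflecting the plane across a horizontal line interchanges down-triangles with up-triangles and, in the cone notation, interchanges each negative cone $\overline{C_i}(p)$ with the corresponding positive cone $C_i(p)$ (and $\overline{c_i}$ with $c_i$); under this reflection $G_\bigtriangledown(P)$ becomes $G_\bigtriangleup(P)$. Hence the reflected form of Property \ref{obs2} asserts that, when $V_i(p) \neq \emptyset$, the point $p' \in V_i(p)$ minimising $c_i(p, p')$ is a neighbour of $p$ in $G_\bigtriangleup(P)$, and therefore in $G_{\davidsstar}(P)$.

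The argument is short, and there is no substantive obstacle beyond bookkeeping: the one point requiring care is to match the orientation of the triangle to the correct family of cones. A down-triangle with one of $p, q$ at a vertex forces the other point into a negative cone, whereas an up-triangle forces it into a positive cone; this is exactly why the reflected form of Property \ref{obs2} must be stated with the positive cones $V_i$ rather than the negative cones $\overline{V_i}$. Once this correspondence is pinned down, both halves of the lemma follow immediately from the subgraph containments $G_\bigtriangledown(P) \subseteq G_{\davidsstar}(P)$ and $G_\bigtriangleup(P) \subseteq G_{\davidsstar}(P)$.
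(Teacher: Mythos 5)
Your proposal is correct and follows essentially the same route as the paper: Property \ref{obs2} gives the neighbour in $\overline{V_i}(p)$ within $G_\bigtriangledown(P)$, the symmetric up-triangle analogue gives the neighbour in $V_i(p)$ within $G_\bigtriangleup(P)$, and the decomposition $G_{\davidsstar}(P)=G_\bigtriangledown(P)\cup G_\bigtriangleup(P)$ finishes the argument, exactly as in the paper's (terser) proof. One cosmetic detail: a reflection across a horizontal line actually sends $C_1(p)$ to $\overline{C_2}(p)$ rather than $\overline{C_1}(p)$ (a $180^{\circ}$ rotation is the symmetry pairing $C_i$ with $\overline{C_i}$ index-for-index), but this permutation of indices does not affect the validity of your argument.
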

\begin{proof}
 If $\overline{V_i}(p)$ is non-empty, then by Property \ref{obs2}, in $G_{\bigtriangledown}(P)$ vertex $p$ has a neighbour in $\overline{V_i}(p)$. Similarly, we can prove that if ${V_i}(p)$ is non-empty, then in $G_{\bigtriangleup}(P)$, vertex $p$ has a neighbour in ${V_i}(p)$. Since $G_{\davidsstar}(P)$ = $G_{\bigtriangledown}(P) \cup G_{\bigtriangleup}(P)$, the proof is complete.
\qed
\end{proof}
\begin{lemma}
 For a point set $P$, its $G_{\davidsstar}(P)$ can have at most two degree-one vertices.
\end{lemma}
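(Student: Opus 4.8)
The plan is to first pin down exactly what a degree-one vertex looks like, and then to rule out three of them by a short index-tracking argument based on Property \ref{obs1}. The key observation is that, unlike in $G_\bigtriangledown(P)$ (where the case analysis of Lemma \ref{degreeone} was needed), here Lemma \ref{bidirectionNeighbour} gives us a neighbour in \emph{every} non-empty cone, both positive and negative, which immediately lower-bounds the degree by the number of occupied cones.

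First I would establish the characterization. The six cones $C_1(p), C_2(p), C_3(p), \overline{C_1}(p), \overline{C_2}(p), \overline{C_3}(p)$ partition the plane around $p$ into six $60^\circ$ sectors, and by the general position assumption no other point of $P$ lies on any bounding ray; hence every point of $P\setminus\{p\}$ lies strictly inside exactly one of them, so the sets $V_1(p), V_2(p), V_3(p), \overline{V_1}(p), \overline{V_2}(p), \overline{V_3}(p)$ are pairwise disjoint. By Lemma \ref{bidirectionNeighbour}, $p$ has a neighbour in each non-empty set among these six, and disjointness makes these neighbours distinct. Thus the degree of $p$ in $G_{\davidsstar}(P)$ is at least the number of non-empty cones at $p$. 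Consequently, if $p$ has degree one then (since $|P|\ge 2$ forces at least one non-empty cone) exactly one of the six cones is non-empty, and it contains all of $P\setminus\{p\}$; I will call it $K_p$.

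Next I would assume for contradiction that $p, q, r$ are three degree-one vertices and track the cone relations using Property \ref{obs1}. Since $K_p$ contains both $q$ and $r$, write without loss of generality $K_p = C_i(p)$ for some $i\in\{1,2,3\}$ (the case $K_p=\overline{C_i}(p)$ is symmetric), so $q,r\in C_i(p)$. By Property \ref{obs1}, $q\in C_i(p)$ yields $p\in\overline{C_i}(q)$, forcing the unique non-empty cone of $q$ to be $K_q=\overline{C_i}(q)$, and hence $r\in\overline{C_i}(q)$ as well. Now I look at the two memberships this imposes on $r$: from $r\in C_i(p)$, Property \ref{obs1} gives $p\in\overline{C_i}(r)$, while from $r\in\overline{C_i}(q)$ it gives $q\in C_i(r)$. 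Thus $p$ and $q$ sit in the two opposite (hence disjoint) cones $\overline{C_i}(r)$ and $C_i(r)$, contradicting that $r$ has all of $P\setminus\{r\}$ inside the single cone $K_r$.

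The argument is essentially forced once the degree-one characterization is in hand, so I do not expect a serious obstacle. The only point requiring care — which I would state explicitly — is that Property \ref{obs1} preserves the cone index $i$ when passing between $p$, $q$ and $r$, so that $C_i(r)$ and $\overline{C_i}(r)$ genuinely share the same index and are therefore truly opposite and disjoint; the remaining bookkeeping is just checking the one symmetric starting case $K_p=\overline{C_i}(p)$, which runs identically with the roles of positive and negative cones interchanged.
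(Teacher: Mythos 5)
Your proposal is correct and follows essentially the same route as the paper's own proof: both use Lemma \ref{bidirectionNeighbour} to show that a degree-one vertex has exactly one non-empty cone containing all of $P\setminus\{p\}$, then apply Property \ref{obs1} to a hypothetical third degree-one vertex $r$ to conclude that both $C_i(r)$ and $\overline{C_i}(r)$ are non-empty, a contradiction. Your write-up is in fact slightly more careful than the paper's, since you explicitly invoke general position for the disjointness of the six cones and explicitly note the symmetric case $K_p=\overline{C_i}(p)$, but these are presentational refinements, not a different argument.
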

\begin{proof}
Let $P$ be a point set and $p\in P$ be such that the vertex $p$ is of degree one in $G_{\davidsstar}(P)$. From Lemma \ref{bidirectionNeighbour}, there exists an $i \in \{1, 2, 3\}$ such that exactly one of
 ${V_i}(p)$ and $\overline{V_i}(p)$ is non-empty and contains all points in $P \setminus \{p\}$. Without loss of generality, assume that $i=1$ and $V_1(p) \ne \emptyset$, $\overline{V_1}(p) = \emptyset$. Then, for $j \in \{2, 3\}, {V_j}(p)= \emptyset$ and $\overline{V_j}(p)=\emptyset$. 
 
 Let $q \in P$ be another point such that the vertex $q$ is of degree one in $G_{\davidsstar}(P)$. We know that $q \in V_1(p)$ and hence $p \in \overline{V_1}(q)$. Again, from Lemma \ref{bidirectionNeighbour}, we get $V_1(q) =\emptyset$ and for $j \in \{2, 3\}, {V_j}(q)= \emptyset$ and $\overline{V_j}(q)=\emptyset$. Thus, $P\setminus\{q\} \subseteq \overline{V_1}(q)$.
 
 If there is a third point $r \in P$ such that the vertex $r$ is also of degree one in $G_{\davidsstar}(P)$, then we get $r \in V_1(p)$ and $r \in \overline{V_1}(q)$. This will mean that $V_1(r) \ne \emptyset$ and $\overline{V_1}(r) \ne \emptyset$. By Lemma \ref{bidirectionNeighbour}, this is not possible because $r$ is of degree one. Thus we conclude that  $G_{\davidsstar}(P)$ has at most two degree-one vertices.
\qed
\end{proof}
 \subsection{Block cut point graph}
Let $G(V, E)$ be a graph. A block of $G$ is a maximal connected subgraph having no cut vertex. The block cut point graph of $G$ is a bipartite graph $B(G)$ whose vertices are cut-vertices of $G$ and blocks of $G$, with a cut-vertex $x$ adjacent to a block $X$ if $x$ is a vertex of block $X$. For a connected graph, the block-cutpoint graph is always a tree \cite{Diestel}. For a connected graph, its block cut point graph gives information about its $2$-connectivity structure. In this section, we will show that the block cut point graph of $G_{\davidsstar}(P)$ is a simple path.
\begin{lemma}\label{numcomp}
Let $P$ be a point set and $p \in P$ be a cut vertex of $G_{\davidsstar}(P)$. Then, there exists an $i \in \{1, 2, 3\}$ such that ${V_i}(p)\ne \emptyset$, $\overline{V_i}(p)\ne \emptyset$ and for all $j \in \{1, 2, 3\} \setminus\{i\}$, ${V_j}(p) = \emptyset$ and $\overline{V_j}(p) = \emptyset$. Moreover, $G_{\davidsstar}(P) \setminus p$ has exactly two connected components, one containing all vertices in $V_i(p)$ and the other containing all vertices of $\overline{V_i}(p)$.
 \end{lemma}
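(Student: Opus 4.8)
The plan is to reduce the entire statement to a single geometric fact: for a pair of points $a,b$, when do \emph{both} $\bigtriangledown ab$ and $\bigtriangleup ab$ contain $p$? Place $p$ at the origin and introduce the three linear functionals $\phi_1(w)=y$, $\phi_2(w)=-\tfrac{\sqrt3}{2}x-\tfrac12 y$, $\phi_3(w)=\tfrac{\sqrt3}{2}x-\tfrac12 y$, chosen so that every down-triangle is a set of the form $\{w:\phi_k(w)\le t_k,\ k=1,2,3\}$ and so that $\phi_1+\phi_2+\phi_3\equiv 0$. The smallest down-triangle containing $a,b$ is obtained with $t_k=\max(\phi_k(a),\phi_k(b))$, so $p$ (where every $\phi_k$ vanishes) lies in $\bigtriangledown ab$ if and only if, for each $k$, at least one of $a,b$ satisfies $\phi_k\ge 0$. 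Since an up-triangle is a set of the form $\{w:-\phi_k(w)\le t_k\}$, the same computation gives that $p\in\bigtriangleup ab$ if and only if, for each $k$, at least one of $a,b$ satisfies $\phi_k\le 0$. The general position assumption forbids lines through two points at angles $0^\circ,60^\circ,120^\circ$, which guarantees $\phi_k\ne 0$ on every point of $P$ and that $p$ never lies on a triangle boundary, so I may work with strict signs throughout.

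The first thing I would verify is that the sign vector $(\mathrm{sgn}\,\phi_1,\mathrm{sgn}\,\phi_2,\mathrm{sgn}\,\phi_3)$ is a faithful label for the six cones around $p$, and in particular that two cones are opposite, i.e. $C_i(p)$ and $\overline{C_i}(p)$, precisely when their sign vectors are negatives of one another. Combining this with the two containment criteria yields the key lemma: $p\in\bigtriangledown ab$ and $p\in\bigtriangleup ab$ both hold if and only if, for every $k$, the numbers $\phi_k(a)$ and $\phi_k(b)$ have opposite signs, that is, if and only if $a$ and $b$ lie in opposite cones $C_i(p),\overline{C_i}(p)$. Contrapositively, if $a$ and $b$ are \emph{not} in opposite cones, then at least one of $\bigtriangledown ab$, $\bigtriangleup ab$ misses $p$; applying Lemma \ref{pathintriangle} (for $G_\bigtriangledown$, or its up-triangle analogue for $G_\bigtriangleup$) to that triangle produces a path between $a$ and $b$ that stays inside the triangle and hence avoids $p$. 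Thus any two points that are not in opposite cones lie in the same connected component of $G_{\davidsstar}(P)\setminus p$.

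With this in hand the lemma falls out. Since $p$ is a cut vertex, $G_{\davidsstar}(P)\setminus p$ is disconnected, so I can pick $a,b$ in different components; by the key lemma they must lie in opposite cones, say $a\in C_i(p)$ and $b\in\overline{C_i}(p)$, which already gives $V_i(p)\ne\emptyset$ and $\overline{V_i}(p)\ne\emptyset$. Next I would rule out every other non-empty cone: any point $c$ lying in a cone other than $C_i(p)$ and $\overline{C_i}(p)$ is not opposite to $a$ and not opposite to $b$, so $c$ is connected to both, forcing $a$ and $b$ into a common component, a contradiction. Hence all cones except $C_i(p)$ and $\overline{C_i}(p)$ are empty, which is the first assertion. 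Finally, all points of $V_i(p)$ are pairwise in the same (non-opposite) cone and therefore lie in one component, and likewise all points of $\overline{V_i}(p)$ lie in one component; since these two sets exhaust $P\setminus\{p\}$ and the graph is disconnected, they must be two distinct components and there are no others.

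The main obstacle is the geometric heart in the first paragraph: pinning down the exact characterization of $p\in\bigtriangledown ab$ and $p\in\bigtriangleup ab$ through the signs of the $\phi_k$, and confirming that general position removes every boundary and degenerate case so that the strict-sign reasoning is legitimate. Once that characterization and the cone/sign dictionary are in place, the remaining component bookkeeping is routine.
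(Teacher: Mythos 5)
Your proof is correct and follows essentially the same route as the paper: both arguments reduce the lemma to the fact that two points can lie in different components of $G_{\davidsstar}(P)\setminus p$ only if they sit in opposite cones around $p$, proved by applying Lemma \ref{pathintriangle} to whichever of $\bigtriangledown ab$, $\bigtriangleup ab$ avoids $p$, followed by the same component bookkeeping. The only difference is presentational: the paper asserts the cone-by-cone triangle containments by inspection, whereas your sign-vector computation with the functionals $\phi_k$ establishes the same facts (indeed an exact iff characterization) algebraically.
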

\begin{proof}
Since $p$ is a cut vertex of $G_{\davidsstar}(P)$, we know that there exist $v_1, v_2 \in P$ that are in different components of $G_{\davidsstar}(P) \setminus p$. We will show that $v_1$ and $v_2$ should be in opposite cones with reference to the apex point $p$.

Without loss of generality, assume that $v_1 \in A_1(p) \cap P \setminus \{p\}$. If $v_2 \in ( A_1(p) \cup A_2(p) \cup A_6(p)) \cap (P \setminus \{p\})$, then, $p \notin \bigtriangledown v_1 v_2$ and hence by Lemma \ref{pathintriangle}, there is a path in $G_{\bigtriangledown}(P)$ between $v_1$ and $v_2$ that does not pass through $p$, which is not possible. Similarly, if $v_2 \in (A_3(p) \cup A_5(p)) \cap (P \setminus \{p\})$, then, $p \notin \bigtriangleup v_1 v_2$ and there is a path in $G_{\bigtriangleup}(P)$ between $v_1$ and $v_2$ that does not pass through $p$, which is not possible. Therefore, $v_2 \in A_4(p)$, the cone which is opposite to $A_1(p)$ which contains $v_1$. Thus any two points $v_1$ and $v_2$ which are in different connected components of $G_{\davidsstar}(P) \setminus p$, are in opposite cones around $p$. 

Let $C_1$ and $C_2$ be two connected components of $G_{\davidsstar}(P) \setminus p$ with $v_1 \in C_1$ and $v_2 \in C_2$. Without loss of generality, assume that such $v_1 \in {V_1}(p)$ and $v_2 \in \overline{V_1}(p)$. From the paragraph above, we know that every vertex of $G_{\davidsstar}(P) \setminus p$ which is not in $C_1$ is in $\overline{V_1}(p)$ and every vertex of $G_{\davidsstar}(P) \setminus p$ which is not in $C_2$ is in ${V_1}(p)$. This implies that for all $j \in \{2, 3\}$, ${V_j}(p) = \emptyset$ and $\overline{V_j}(p) = \emptyset$. This proves the first part of our lemma. 

For any $v_1, v_2 \in \overline{V_i}(p)$, we have $p \notin \bigtriangledown v_1 v_2$ and hence by Lemma \ref{pathintriangle}, there is a path in $G_{\bigtriangledown}(P)$ between $v_1$ and $v_2$ that does not pass through $p$. Similarly, for any $v_1, v_2 \in V_i(p)$, $p \notin \bigtriangleup v_1 v_2$ and there is a path in $G_{\bigtriangleup}(P)$ between $v_1$ and $v_2$ that does not pass through $p$. Therefore, there are exactly two connected components in $G_{\davidsstar}(P) \setminus p$, one containing all vertices in $V_i(p)$ and the other containing all vertices of $\overline{V_i}(p)$.
\qed
\end{proof}
\begin{theorem}\label{blocks}
Let $P$ be a point set in general position and let $k$ be the number of blocks of $G_{\davidsstar}(P)$. Then, the blocks of $G_{\davidsstar}(P)$ can be arranged linearly as $B_1,B_2, \ldots B_k$ such that, for $i > j$, $B_i \cap B_{j}$ contains a single (cut) vertex $p_i$ when $j = i+1$ and $B_i \cap B_{j}$ is an empty graph otherwise. That is, the block cut point graph of $G_{\davidsstar}(P)$ is a path. 
\end{theorem}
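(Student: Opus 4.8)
The plan is to prove the equivalent statement that the block cut point graph $B(G_{\davidsstar}(P))$ has maximum degree at most two. Since $G_{\davidsstar}(P)$ is connected, $B(G_{\davidsstar}(P))$ is a tree, and a tree of maximum degree at most two is a path; reading the blocks off in the order they appear along this path then yields the claimed arrangement $B_1,\dots,B_k$, with consecutive blocks meeting in the cut vertex lying between them and non-consecutive blocks sharing no vertex. Recall that in $B(G_{\davidsstar}(P))$ the degree of a cut-vertex node equals the number of blocks containing it, while the degree of a block node equals the number of cut vertices it contains. Hence it suffices to establish two facts: (A) every cut vertex lies in exactly two blocks, and (B) every block contains at most two cut vertices.

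For (A) I would invoke the standard correspondence that, for a cut vertex $p$ of a connected graph, the number of blocks containing $p$ equals the number of connected components of $G_{\davidsstar}(P)\setminus p$. Indeed, each block through $p$ is either a single edge or a $2$-connected subgraph, so it stays connected after deleting $p$; distinct blocks through $p$ meet only in $p$, and the subtrees of $B(G_{\davidsstar}(P))$ hanging off these blocks account for precisely these components. By Lemma \ref{numcomp}, $G_{\davidsstar}(P)\setminus p$ has exactly two components, so $p$ lies in exactly two blocks, giving every cut-vertex node degree two.

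For (B) I would argue by contradiction with a short angle computation. Suppose a block $B$ contains three cut vertices $p,q,r$. Then $B$ has at least three vertices and no cut vertex of its own, hence is $2$-connected, so $B\setminus p$ is connected and contains both $q$ and $r$; therefore $q$ and $r$ lie in the same component of $G_{\davidsstar}(P)\setminus p$. By Lemma \ref{numcomp} the two components of $G_{\davidsstar}(P)\setminus p$ are exactly $V_i(p)$ and $\overline{V_i}(p)$ for a single $i$, and each of these is a $60^\circ$ cone with apex $p$. Thus $q$ and $r$ lie strictly inside one common open $60^\circ$ cone, so by the general position assumption the angle $\angle qpr$ is strictly less than $60^\circ$. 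Applying the same reasoning at $q$ and at $r$ gives $\angle pqr<60^\circ$ and $\angle qrp<60^\circ$. But the interior angles of the triangle $pqr$ sum to $180^\circ$, a contradiction. (If $p,q,r$ were collinear, the same-cone requirement already fails at whichever point lies between the other two, since from it the other two points lie in opposite directions.) Hence no block contains three cut vertices.

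Combining (A) and (B), every node of the tree $B(G_{\davidsstar}(P))$ has degree at most two, so the tree is a path, which is exactly the assertion. I expect the main obstacle to be step (B): its entire force lies in recognizing that the two separating regions supplied by Lemma \ref{numcomp} are genuine $60^\circ$ cones, so that three cut vertices sharing a block would each see the other two within a single $60^\circ$ wedge and thus force a triangle whose three angles are all below $60^\circ$. Step (A) is routine once the block-component correspondence is stated, and the degenerate collinear case needs only a sentence.
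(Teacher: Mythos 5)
Your proof is correct, and its skeleton---reduce the theorem to showing that the block cut point tree has maximum degree at most two, via (A) every cut vertex lies in exactly two blocks and (B) every block contains at most two cut vertices---is the same as the paper's. Step (A) coincides with the paper's: both read it off from Lemma \ref{numcomp} (deleting a cut vertex leaves exactly two components) together with the standard block--component correspondence. Where you genuinely diverge is step (B). The paper argues combinatorially: fixing two cut vertices $v_1, v_2$ of a block $B_i$, it uses Lemma \ref{numcomp} and the cone duality (Property \ref{obs1}) to pin down cone indices, concluding that a third cut vertex $v_3$ would satisfy $v_1 \in \overline{V_1}(v_3)$ and $v_2 \in V_1(v_3)$, i.e.\ $v_1$ and $v_2$ sit in \emph{opposite} cones of $v_3$; a final application of Lemma \ref{numcomp} then contradicts the fact that $v_1, v_2$ lie in the same component of $G_{\davidsstar}(P) \setminus v_3$, which follows from the $2$-connectivity of $B_i$. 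Your argument for (B) is instead metric: each of the three putative cut vertices must see the other two inside a single $60^\circ$ cone, forcing a (possibly degenerate) triangle all of whose interior angles are strictly below $60^\circ$, violating the angle sum; general position gives the strictness, and your one-sentence treatment of the collinear case is also right. Your version treats the three vertices symmetrically, needs neither Property \ref{obs1} nor any matching of cone indices, and is arguably cleaner; the paper's version stays entirely within the cone-index combinatorics and never invokes Euclidean angle measure. Both are complete, and both ultimately rest on the same content of Lemma \ref{numcomp}: that the two sides of any cut vertex are confined to a single pair of opposite $60^\circ$ cones.
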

\begin{proof}
If $G_{\davidsstar}(P)$ is two-connected, there is only a single block and the lemma is trivially true. 

Since $G_{\davidsstar}(P)$ is a connected graph, its block cut point graph is a tree. Any two blocks can have at most one vertex in common and the common vertex is a cut vertex. From Lemma \ref{numcomp}, we also know that three or more blocks cannot share a common (cut) vertex. If a block $B_i$ of $G_{\davidsstar}(P)$ is such that, in the block cut point graph of $G_{\davidsstar}(P)$, the node corresponding to block $B_i$ is a leaf node, $B_i$ is adjacent to only one another block and they share a single (cut) vertex. 

If the node corresponding to $B_i$ is not a leaf node of the block cut point graph, we know that $B_i$ shares (distinct) common vertices with at least two other blocks $B_{i'}$ and $B_{i''}$. Therefore, two vertices in $B_i$ are cut vertices of $G_{\davidsstar}(P)$. Let $v_1, v_2$ be these cut vertices. We will show that there cannot be a third such cut vertex in $B_i$.

By Lemma \ref{numcomp}, we know that $G_{\davidsstar}(P) \setminus v_1$ has exactly two components and since $B_i$ is $2$-connected initially, all vertices of $B_i$ except $v_1$ are in the same connected component of $G_{\davidsstar}(P) \setminus v_1$. By Lemma \ref{numcomp}, all vertices of $B_i$ lie in the same (designated) cone with apex $v_1$. Without loss of generality, assume that all vertices in $B_i \setminus v_1$ are in $V_1(v_1)$. In particular, $v_2 \in V_1(v_1)$ and hence $v_1 \in \overline{V_1}(v_2)$. Similarly, since $v_2$ is a cut vertex, all vertices of $B_i$ lie in the same (designated) cone with apex $v_2$. Since $v_1 \in \overline{V_1}(v_2)$, all vertices in $B_i \setminus v_2$ are in $\overline{V_1}(v_2)$. If $v_3$ is a vertex in $B_i$, distinct from $v_1$ and $v_2$, then from the discussion above, we get $v_3 \in V_1(v_1)$ and $v_3 \in \overline {V_1}(v_2)$. Hence $v_1 \in \overline{V_1}(v_3)$ and $v_2 \in V_1(v_3)$. Suppose $v_3$ is a cut vertex in $G_{\davidsstar}(P)$. Since $v_1$ and 
$v_2$ are in the same connected component of $G_{\davidsstar}(P) \setminus v_3$, it is a contradiction to Lemma \ref{numcomp}, that $v_1 \in \overline{V_1}(v_3)$ and $v_2 \in {V_1}(v_3)$. 

Thus, if the node corresponding to $B_i$ is not a leaf node of the block cut point graph of $G_{\davidsstar}(P)$, then exactly two vertices in $B_i$ are cut vertices of $G_{\davidsstar}(P)$. Since no three blocks can share a common vertex by Lemma \ref{numcomp}, we are done.
\qed
\end{proof}
\subsection{Number of Edges of $G_{\davidsstar}(P)$}
Since $G_\bigtriangledown(P)$ and $G_\bigtriangleup(P)$ are planar graphs and $G_{\davidsstar}(P)=G_\bigtriangledown(P) \cup G_\bigtriangleup(P)$, it is obvious that $G_{\davidsstar}(P)$ has at most $2 \times (3n-6) = 6n -12$ edges, where $n=|P|$ \cite{Diestel}. In this section, we show that for any point set $P$, its $G_{\davidsstar}(P)$ has a spanning tree of a special structure, which will imply that $G_{\davidsstar}(P)$ can have at most $5n-11$ edges.
\begin{lemma}\label{spanningTree}
 For a point set $P$, the intersection of $G_\bigtriangledown(P)$ and $G_\bigtriangleup(P)$ is a connected graph.
\end{lemma}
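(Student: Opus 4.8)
The plan is to argue by contradiction, exploiting the fact that an edge $pq$ lies in $G_\bigtriangledown(P)\cap G_\bigtriangleup(P)$ if and only if \emph{both} $\bigtriangledown pq$ and $\bigtriangleup pq$ are empty of points of $P\setminus\{p,q\}$ (a point inside either triangle destroys the corresponding adjacency, and conversely emptiness of both gives the two adjacencies). So suppose $G_\bigtriangledown(P)\cap G_\bigtriangleup(P)$ were disconnected and fix a partition $P=S\cup T$ into two nonempty parts with no intersection-edge joining $S$ to $T$. Call $(p,q)$ with $p\in S$ and $q\in T$ a \emph{crossing pair}, and among all crossing pairs choose one, still written $(p,q)$, minimizing the area of $\bigtriangledown pq$. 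Since $(p,q)$ is not an intersection edge, at least one of $\bigtriangledown pq$, $\bigtriangleup pq$ contains a further point of $P$.

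First I would show that $\bigtriangledown pq$ is empty. If some $x\in P\setminus\{p,q\}$ lay inside $\bigtriangledown pq$, then exactly as in the proof of Lemma~\ref{pathintriangle} we have $\bigtriangledown px\subsetneq\bigtriangledown pq$ and $\bigtriangledown xq\subsetneq\bigtriangledown pq$; whichever of $(p,x)$, $(x,q)$ is a crossing pair (one of them is, depending on the side of the cut containing $x$) would be a crossing pair of strictly smaller $\bigtriangledown$-area, contradicting minimality. Hence $\bigtriangledown pq$ is empty, so $pq\in G_\bigtriangledown(P)$; as $(p,q)$ is not an intersection edge, $\bigtriangleup pq$ must then contain a point $x\in P\setminus\{p,q\}$, and emptiness of $\bigtriangledown pq$ forces $x\in\bigtriangleup pq\setminus\bigtriangledown pq$.

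The crux is to manufacture the same contradiction from this $x$, which now lies \emph{outside} $\bigtriangledown pq$. Using the $120^\circ$-rotational and reflective symmetry of the two triangle families I would normalize so that $q\in\overline{C_3}(p)$, so that $p$ is the bottom apex of $\bigtriangledown pq$ and $q$ the top apex of $\bigtriangleup pq$. Describing an arbitrary smallest down-triangle through the three functionals $t=\max v$, $k_1=\min(v-\sqrt{3}\,u)$, $k_2=\min(v+\sqrt{3}\,u)$ taken over the point coordinates $(u,v)$ of the pair (its height is then $t-\tfrac12(k_1+k_2)$, and its area is monotone in this height), the membership $x\in\bigtriangleup pq$ bounds the two slanted functionals of $x$ by those of $q$, while $x\notin\bigtriangledown pq$ places $x$ in a bottom corner of $\bigtriangleup pq$ with strictly positive height $v_x>0$. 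A short computation then yields that \emph{both} $\bigtriangledown px$ and $\bigtriangledown xq$ have strictly smaller height, hence strictly smaller area, than $\bigtriangledown pq$. Once more the crossing one of $(p,x)$, $(x,q)$ contradicts the minimality of $(p,q)$, and the contradiction completes the proof.

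I expect this last step --- showing that the smallest down-triangles to the two endpoints shrink even for a blocking point lying in $\bigtriangleup pq\setminus\bigtriangledown pq$ --- to be the only genuinely geometric obstacle; everything else is bookkeeping around the minimal crossing pair, and the reduction in the second paragraph reuses the nesting already exploited in Lemma~\ref{pathintriangle}.
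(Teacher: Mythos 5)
Your proof is correct, but it follows a genuinely different route from the paper's. The paper argues constructively, in the style of Prim's algorithm: it maintains a set $S$ on which $G_\bigtriangledown(P)\cap G_\bigtriangleup(P)$ is connected, lets $d$ be the smallest cone-projection distance between $S$ and $P\setminus S$, picks a pair $p\in S$, $q\in P\setminus S$ realizing $d$, and shows that the two hexagons of radius $d$ centered at $p$ and $q$ contain no other point of $P$ in their intersection; since $\bigtriangledown pq\cup\bigtriangleup pq$ lies inside that intersection, $pq$ is an edge of both graphs at once, and $S$ can be grown until it equals $P$. You instead fix an arbitrary cut and take a crossing pair minimizing the area of $\bigtriangledown pq$ --- a minimal-counterexample version of the same idea, but with an asymmetric measure (down-triangle area only), which is why you are forced into the extra geometric step. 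That step does check out: normalizing $p=(0,0)$, $q=(q_u,h)$ with $\sqrt{3}\,|q_u|<h$, a blocking point $x=(x_u,x_v)\in\bigtriangleup pq\setminus\bigtriangledown pq$ lies (say) in the right bottom corner of $\bigtriangleup pq$, where $0<x_v<\sqrt{3}\,x_u$ and $x_v+\sqrt{3}\,x_u<h+\sqrt{3}\,q_u$, and then $\bigtriangledown px$ and $\bigtriangledown xq$ have heights $\tfrac12\left(x_v+\sqrt{3}\,x_u\right)<h$ and $h-x_v<h$ respectively (the left corner is symmetric), so one of $(p,x)$, $(x,q)$ is a crossing pair with strictly smaller down-triangle, as you claim. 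One small repair: rotations by $120^\circ$ and reflections in vertical lines preserve the partition of the six cones into positive and negative ones, so they alone cannot place $q$ in $\overline{C_3}(p)$ when $q$ lies in a positive cone of $p$; you must also allow swapping the labels of $p$ and $q$, which is harmless because crossing pairs are unordered and $\bigtriangledown pq=\bigtriangledown qp$. As for what each approach buys: the paper's hexagon argument is symmetric in the two triangle families, so emptiness of both triangles comes for free with no coordinate computation, and it directly yields a greedy spanning-tree construction; your argument is more elementary in that it introduces no auxiliary distance and recycles the nesting idea of Lemma~\ref{pathintriangle}, at the price of breaking the down/up symmetry and paying for it with an explicit computation and a case analysis over cones.
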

\begin{proof}
 We will prove this algorithmically. At any point of execution of this algorithm, we maintain a partition of $P$ into two sets $S$ and $P \setminus S$ such that the induced subgraph of $G_\bigtriangledown(P) \cap G_\bigtriangleup(P)$ on $S$ is connected. When the algorithm terminates, we will have $S=P$, which will prove the lemma.

 We start by adding any arbitrary point $p_1 \in P$ to $S$. The induced subgraph of $G_\bigtriangledown(P) \cap G_\bigtriangleup(P)$ on $S$ is trivially connected now. 
 
 At any intermediate step of the algorithm, let $S = \{p_1, p_2, \ldots, p_k \} \ne P$, such that the invariant is true. We will show that we can add a point $p_{k+1}$ from $P \setminus S$ into $S$, and still maintain the invariant. 

 For any point $p\in S$, let $d_1(p) = \displaystyle\min_{i \in \{1, 2, 3\}, p' \in V_i(p) \cap P \setminus S}c_i(p, p')$, $d_2(p) = \displaystyle\min_{i \in \{1, 2, 3\}, p' \in \overline{V_i}(p)\cap P \setminus S}\overline{c_i}(p, p')$ and $d(p)=\min(d_1(p), d_2(p))$. Since $|P\setminus S|\ge 1$, $d(p) < \infty$. Let $d=\displaystyle \min_{p \in S} d(p)$. 

Consider $p \in S$ such that $d(p)= d$. By definition of $d$, such a point exists. Consider the area enclosed by the hexagon around $p$ which is defined by $H_p=\displaystyle\bigcup_{i=1}^3 \{p' \in C_i(p) \mid c_i(p, p') \le d\} \cup \displaystyle\bigcup_{i=1}^3 \{p' \in \overline{C_i}(p)\mid \overline{c_i}(p, p') \le d\}$. (See Fig. \ref{Fighexagon} (a).) We know that there exists a point $q \in P \setminus S$ such that $q$ is on the boundary of $H_p$. We claim that $pq$ is an edge in $G_\bigtriangledown(P) \cap G_\bigtriangleup(P)$. 

\begin{figure}[h]
  \centering
  \includegraphics[scale=0.6]{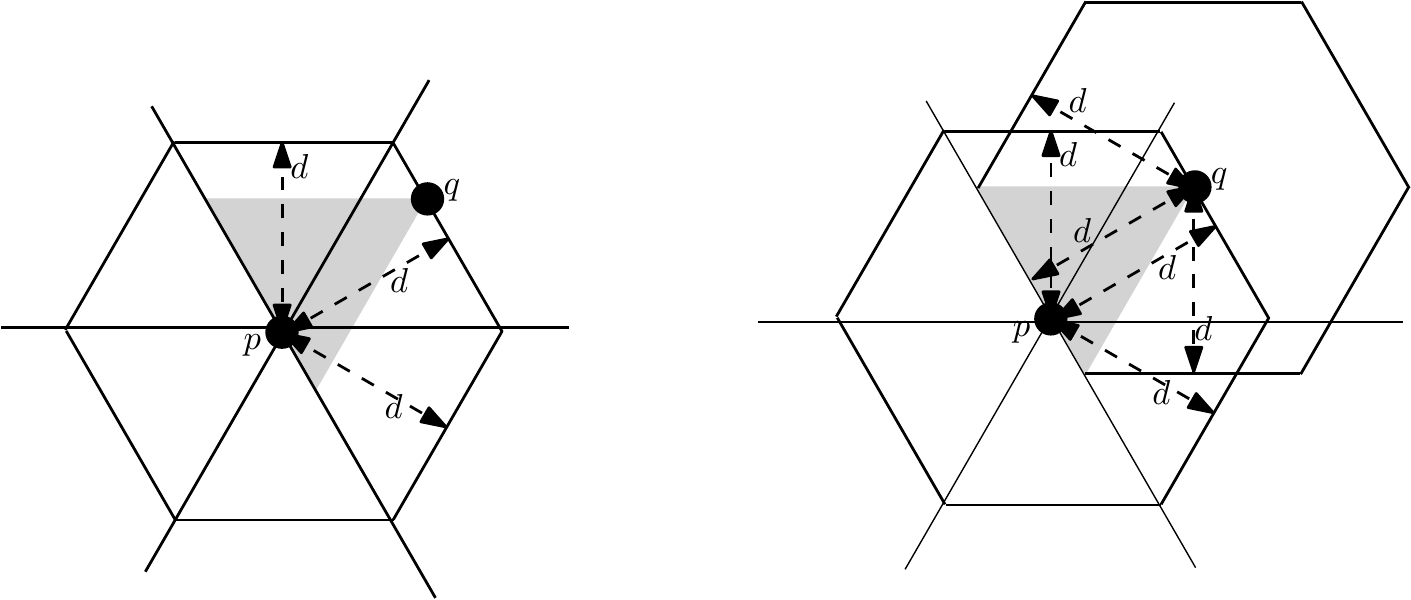}   % this is impossible.pdf
  \caption{(a) Closest point to $p$. (b) Hexagons around closest pairs.}
\label{Fighexagon}
  \end{figure} 

Let $H_q=\displaystyle\bigcup_{i=1}^3 \{p' \in C_i(q) \mid c_i(q, p') \le d\} \cup \displaystyle\bigcup_{i=1}^3 \{p' \in \overline{C_i}(q)\mid \overline{c_i}(q, p') \le d\}$, which is a hexagonal area around $q$. (See Fig. \ref{Fighexagon} (b).) Without loss of generality, assume that $q \in C_1(p)$. Note that, by Property \ref{obs1}, $c_1(p, q)=\overline{c_1}(q, p)=d$ and hence, $\bigtriangledown pq \cup \bigtriangleup pq \subseteq H_p \cap H_q$. 

If there exists a point $q' \in (P \setminus \{q\}) \setminus S$ such that $q'$ lies in the interior of $H_p$, then $d(p)<d$, which is a contradiction. Similarly, if there exists a point $p' \in (P \setminus \{p\}) \cap S$ such that $p'$ lies in the interior of $H_q$, then $d(p)<d$. This is also a contradiction. Therefore, $H_p \cap H_q \cap (P \setminus \{p, q\}) = \emptyset$. Since, $\bigtriangledown pq \cup \bigtriangleup pq \subseteq H_p \cap H_q$, this implies that $\bigtriangledown pq \cap (P \setminus \{p, q\}) = \emptyset$ and $\bigtriangleup pq \cap (P \setminus \{p, q\}) = \emptyset$. This implies that $pq$ is an edge in $G_\bigtriangledown(P)$ as well as in $G_\bigtriangleup(P)$.

Since $pq$ is an edge in $G_\bigtriangledown(P) \cap G_\bigtriangleup(P)$, we can add $p_{k+1}=q$ to the set $S$, thus increasing the cardinality of $S$ by one, and still maintaining the invariant that the induced subgraph of $G_\bigtriangledown(P) \cap G_\bigtriangleup(P)$ on $S$ is connected. Since we can keep on doing this until $S=P$, we conclude that $G_\bigtriangledown(P) \cap G_\bigtriangleup(P)$ is connected. 
\qed
\end{proof}
\begin{theorem}
  For a set $P$ of $n$ points in general position, $G_{\davidsstar}(P)$ has at most $5n-11$ edges and hence its average degree is less than $10$.
\end{theorem}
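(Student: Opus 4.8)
The plan is to combine Euler's formula for planar graphs with the spanning-tree structure furnished by Lemma~\ref{spanningTree}. The key idea is that $G_{\davidsstar}(P) = G_\bigtriangledown(P) \cup G_\bigtriangleup(P)$ is the union of two planar graphs, so a naive count gives at most $2(3n-6) = 6n-12$ edges; the improvement to $5n-11$ must come from the fact that the two graphs share a substantial common subgraph. By Lemma~\ref{spanningTree}, the intersection $G_\bigtriangledown(P) \cap G_\bigtriangleup(P)$ is connected, hence contains a spanning tree $T$ with exactly $n-1$ edges. These $n-1$ edges are counted \emph{twice} in the crude bound $6n-12$, once as edges of $G_\bigtriangledown(P)$ and once as edges of $G_\bigtriangleup(P)$, but they appear only once in $G_{\davidsstar}(P)$.

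First I would write $|E(G_{\davidsstar}(P))| = |E(G_\bigtriangledown(P))| + |E(G_\bigtriangleup(P))| - |E(G_\bigtriangledown(P) \cap G_\bigtriangleup(P))|$, by inclusion--exclusion on edge sets. Since both $G_\bigtriangledown(P)$ and $G_\bigtriangleup(P)$ are planar (Lemma~\ref{lmplanar} and its symmetric counterpart), each has at most $3n-6$ edges. For the intersection term, Lemma~\ref{spanningTree} guarantees connectivity, so $|E(G_\bigtriangledown(P) \cap G_\bigtriangleup(P))| \ge n-1$. Substituting these bounds yields
\[
|E(G_{\davidsstar}(P))| \le (3n-6) + (3n-6) - (n-1) = 5n - 11,
\]
which is exactly the claimed bound. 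The average-degree claim then follows immediately, since the average degree equals $2|E|/n \le (10n-22)/n = 10 - 22/n < 10$.

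The main subtlety to get right is the direction of the inequalities: the two planar upper bounds $3n-6$ push the edge count up, while the intersection lower bound $n-1$ must be \emph{subtracted}, so a lower bound on the shared edges correctly translates into a better upper bound on the total. I would emphasize that connectivity of the intersection is precisely what is needed---any connected graph on $n$ vertices has at least $n-1$ edges---and that this is the single place where Lemma~\ref{spanningTree} does real work. I expect no serious obstacle here; the only care required is noting that the planar bound $3n-6$ holds for each of $G_\bigtriangledown(P)$ and $G_\bigtriangleup(P)$ individually (valid for $n \ge 3$), and that inclusion--exclusion is being applied to edge \emph{sets} rather than to graphs, which is clean because all three graphs share the same vertex set $P$.
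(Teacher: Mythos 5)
Your proposal is correct and follows essentially the same route as the paper: bound each of $G_\bigtriangledown(P)$ and $G_\bigtriangleup(P)$ by $3n-6$ edges via planarity, use Lemma~\ref{spanningTree} to show the intersection contains a spanning tree and hence at least $n-1$ shared edges, and apply inclusion--exclusion to get $(3n-6)+(3n-6)-(n-1)=5n-11$. Your additional remarks on the direction of the inequalities and on $n \ge 3$ are sound but not substantively different from the paper's argument.
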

\begin{proof}
Since $G_{\bigtriangledown}(P)$ and $G_{\bigtriangleup}(P)$ are both planar graphs we know that each of them can have at most $3 n-6$ edges. From Lemma \ref{spanningTree}, we know that the intersection of $G_{\bigtriangledown}(P)$ and $G_{\bigtriangleup}(P)$ contains a spanning tree and hence they have at least $n-1$ edges in common. From this, we conclude that the number of edges in $G_{\davidsstar}(P) = G_\bigtriangledown(P) \cup G_\bigtriangleup(P)$ is at most $(3n-6) +(3n-6) -(n-1) = 5n-11$. Hence,the average degree of $G_{\davidsstar}(P)$ is less than $10$.
\qed
\end{proof}
\begin{corollary}
 For a set $P$ of $n$ points in general position, its $\Theta_6$ graph has at most $5n-11$ edges.
\end{corollary}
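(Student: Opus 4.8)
The plan is to observe that this corollary is an immediate consequence of the preceding theorem together with the graph equivalence established in the Preliminaries, so essentially no new geometric work is required. First I would recall that in Section \ref{prelims} it was shown, following Bonichon et al. \cite{Bonichon2010}, that every $\Theta_6$-graph is the union of the two half-$\Theta_6$-graphs defined on the positive cones $C_i$ and the negative cones $\overline{C_i}$, and that in our notation these two half-$\Theta_6$-graphs are precisely $G_\bigtriangleup(P)$ and $G_\bigtriangledown(P)$. Hence, for any point set $P$ in general position, $\Theta_6(P) = G_\bigtriangledown(P) \cup G_\bigtriangleup(P) = G_{\davidsstar}(P)$.

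Given this identification, the second step is simply to invoke the immediately preceding theorem, which asserts that for a set $P$ of $n$ points in general position, $G_{\davidsstar}(P)$ has at most $5n-11$ edges. Since the two graphs are literally the same graph on the same vertex set with the same edge set, their edge counts coincide, and the bound $5n-11$ carries over verbatim to $\Theta_6(P)$. I would state this transfer explicitly rather than leaving it implicit, since the whole point of the corollary is to phrase the structural result in the more widely used $\Theta_6$ terminology.

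The only point that warrants any care is the \emph{general position} hypothesis: the equivalence $\Theta_6(P) = G_{\davidsstar}(P)$ and the edge bound both rely on the assumption that no line through two points of $P$ makes an angle of $0^\circ$, $60^\circ$, or $120^\circ$ with the horizontal, as fixed at the outset of the paper. I would therefore make sure the statement of the corollary repeats this precondition, so that the chain of reasoning is valid. Beyond verifying that this hypothesis is carried through, there is no genuine obstacle: all the substantive content—the planarity of $G_\bigtriangledown(P)$ and $G_\bigtriangleup(P)$, and the existence of a common spanning tree giving $n-1$ shared edges via Lemma \ref{spanningTree}—has already been expended in proving the theorem for $G_{\davidsstar}(P)$, and the corollary is just its translation into $\Theta_6$ language.
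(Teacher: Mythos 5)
Your proposal is correct and matches the paper's (implicit) reasoning exactly: the corollary follows immediately from the preceding theorem on $G_{\davidsstar}(P)$ together with the identification $\Theta_6(P) = G_\bigtriangledown(P) \cup G_\bigtriangleup(P) = G_{\davidsstar}(P)$ established in the Preliminaries, which is why the paper states it without a separate proof. Your added care about the general position hypothesis is appropriate, since that assumption underlies both the equivalence and the edge bound.
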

\section{Conclusions}
We have shown that for any set $P$ of $n$ points in general position, any maximum $\bigtriangledown$ (resp. $\bigtriangleup$) matching of $P$ will match at least $2\left(\left\lceil\frac{|P|-2}{3} \right\rceil\right)$ points. This also implies that any half-$\Theta_6$ graph for point sets in general position has a matching of size at least $\left\lceil\frac{|P|-2}{3} \right\rceil$. This bound is tight except when $|P|$ is one less than a multiple of three. We also proved that when $P$ is in general position, the block cut point graph of its $\Theta_6$ graph is a simple path and that the $\Theta_6$ graph has at most $5n-11$ edges. It is an interesting question to see whether for every point set in general position, its $\Theta_6$ graph contains a matching of size $\left\lfloor \frac{|P|}{2} \right\rfloor$. So far, we were not able to get any counter examples for this claim and hence we conjecture the following.
\begin{conjecture}
 For every set of $n$ points in general position, its $\Theta_6$ graph contains a matching of size $\left\lfloor \frac{n}{2} \right\rfloor$.
\end{conjecture}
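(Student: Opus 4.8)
The plan is to reformulate the claim through the Tutte--Berge formula. A matching of size $\lfloor n/2\rfloor$ is exactly a matching whose \emph{deficiency} (the number of unmatched vertices) is at most one; since the deficiency of any graph has the same parity as $n$, this is equivalent to showing that the deficiency of $G_{\davidsstar}(P)$ is $0$ when $n$ is even and $1$ when $n$ is odd. By the Tutte--Berge formula it therefore suffices to prove the single inequality
$$
o\bigl(G_{\davidsstar}(P)\setminus S\bigr)\ \le\ |S|+1 \qquad\text{for every } S\subseteq P ,
$$
where $o(\cdot)$ denotes the number of odd components. I would attack this inequality rather than try to exhibit a matching directly, since an explicit geometric pairing seems hard to control under augmentation.

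The second ingredient is the linear block structure established in Theorem~\ref{blocks}: the blocks of $G_{\davidsstar}(P)$ form a path $B_1,\dots,B_k$ with consecutive blocks sharing a single cut vertex and, by Lemma~\ref{numcomp}, no vertex lying in three blocks. I would use this to localise the Tutte condition. Given a set $S$, every component of $G_{\davidsstar}(P)\setminus S$ is obtained by gluing the pieces left inside the individual blocks after removing $S\cap B_i$ from each $B_i$, and the path arrangement chains these pieces in a one-dimensional fashion. The goal of this step is to reduce the global statement to a per-block one: I would aim to show that every block $B_i$ is factor-critical, or more precisely that, for each choice of its (at most two) cut vertices as ``free'' endpoints, $B_i$ admits a matching of its remaining vertices missing at most one vertex. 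Because the blocks lie along a path and each cut vertex belongs to exactly two blocks, such per-block matchings could then be stitched together from one end of the path to the other, transferring at most one unmatched vertex across each cut vertex, so that the global deficiency stays at most one.

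It remains to prove the per-block statement, and this is where I expect the real difficulty. A block $B$ is a $2$-connected subgraph of the $\Theta_6$ graph, hence of the union of the two internally triangulated plane graphs $G_\bigtriangledown(P)$ and $G_\bigtriangleup(P)$ (Lemmas~\ref{internal} and~\ref{spanningTree}). Planarity of a single half is not enough: the paper itself exhibits a $3$-connected $G_\bigtriangledown(P)$ whose deficiency exceeds one, so the argument must genuinely use that $B$ is covered by both families of empty triangles. To verify the Tutte condition inside $B$ I would bound $o(B\setminus T)$ for $T\subseteq B$ by a contraction argument: contract each component of $B\setminus T$ to a single vertex, separately in $G_\bigtriangledown(P)$ and in $G_\bigtriangleup(P)$, obtaining two plane graphs on $|T|$ plus (number of components) vertices, and combine their $3v-6$ edge bounds. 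The crux is to show that in the union each surviving component must attach to several vertices of $T$ across the two triangle orientations, so that too many odd components would demand more edges than two plane graphs can carry; this is exactly the step that fails for a single orientation and that should exploit the bidirectional neighbour property of Lemma~\ref{bidirectionNeighbour} together with the internal triangulation.

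The main obstacle is precisely this last counting step: planarity alone permits a single vertex of $T$ to be adjacent to arbitrarily many components (a star is plane), so an upper bound on $o(B\setminus T)$ cannot come from edge counts unless one first proves a \emph{geometric} lower bound on how many vertices of $T$ each surviving component must see. Establishing that, in a $2$-connected union of two empty-triangle graphs, every component of $B\setminus T$ has at least two—and ideally three—neighbours in $T$ is the heart of the conjecture and is where I expect the argument to be delicate. In parallel I would try induction on $n$: match a degree-one vertex (of which there are at most two, as shown earlier for $G_{\davidsstar}(P)$) to its neighbour and recurse on the remaining point set, whose $\Theta_6$ graph is again of the required type. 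The pitfall is that deleting two points can create new $\Theta_6$-edges, so the inductively produced pairs need not be edges of the original $G_{\davidsstar}(P)$; closing this gap would require the extremal position of a degree-one vertex to guarantee that the removed pair does not intrude into the defining triangle of any other matched pair.
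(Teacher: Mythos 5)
The first thing to say is that the paper does not prove this statement: it is posed explicitly as an open conjecture in the Conclusions (the authors only report the absence of counterexamples), so there is no proof of record to compare against, and your text is, by its own admission, a programme rather than a proof. Within that programme, the Tutte--Berge reformulation and the parity remark are correct, and localising the condition via Theorem~\ref{blocks} and Lemma~\ref{numcomp} is a sensible opening. But both load-bearing steps are left unestablished, and each has a concrete defect beyond being merely unproven. For the per-block step: a factor-critical graph has odd order, so blocks of even order cannot be factor-critical, and your stitching argument along the block path in fact requires the stronger, prescribed-vertex property --- that each block $B_i$ admits a near-perfect matching in which a \emph{specified} cut vertex is pre-matched (or forced exposed) and the exposed vertex, if any, is the cut vertex shared with $B_{i+1}$. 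Nothing proved in the paper ($2$-connectivity, planarity of each half, internal triangulation of $G_\bigtriangledown(P)$) implies this; indeed the paper's own Section~5.1 example of a $3$-connected $G_\bigtriangledown(P)$ with deficiency growing linearly in $n$ shows that such matching properties fail for one orientation even under strong connectivity, so the property, if true, must come from genuinely new geometry of the union of the two orientations --- which is precisely what is missing.

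The counting step is worse than ``delicate'': it is quantitatively hopeless in the form you sketch. Even if you could prove that every component of $B\setminus T$ attaches to at least three vertices of $T$, contracting components and applying the bipartite planar bound $e\le 2v-4$ to the attachment graph in each orientation yields only an estimate of the shape $3c\le 2(c+|T|)-4$, i.e.\ $c\le 2|T|-4$ per orientation (and no better for the union, since the two attachment graphs can share edges), which is nowhere near the required $c\le |T|+1$. This is why Dillencourt's perfect-matching theorem for Delaunay triangulations proceeds through a bespoke geometric $1$-toughness argument rather than attachment counting; an analogue of $1$-toughness for $G_{\davidsstar}(P)$ is exactly the open content of the conjecture, and your proposal defers rather than supplies it. Your fallback induction has the gap you yourself flag: $G_{\davidsstar}(P\setminus\{p,q\})$ is a supergraph of the induced subgraph of $G_{\davidsstar}(P)$ (emptiness of witness triangles is only easier after deletion), so the inductively produced matching may use edges whose witness triangles contain the deleted pair and which are therefore absent from $G_{\davidsstar}(P)$; no mechanism is offered to repair this. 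In sum: the skeleton is reasonable and correctly locates the difficulty, but no step of it closes the conjecture, and two of its intermediate claims (per-block factor-criticality, sufficiency of planar edge counts) are false as stated.
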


%---------------------------- Bibliography -------------------------------

% Please add the contents of the .bbl file that you generate,  or add bibitem entries manually if you like.
% The entries should be in alphabetical order
\bibliographystyle{splncs03}
\bibliography{mybib}
\end{document}